\crefname{equation}{}{}
\newtheorem*{rep@theorem}{\rep@title}
\newcommand{\newreptheorem}[2]{%
\newenvironment{rep#1}[1]{%
 \def\rep@title{#2 \ref{##1}}%
 \begin{rep@theorem}}%
 {\end{rep@theorem}}}
\newtheorem{theorem}{Theorem}[section]
\newtheorem{lemma}[theorem]{Lemma}
\newtheorem{proposition}[theorem]{Proposition}
\newtheorem{corollary}[theorem]{Corollary}
\newcommand{\beq}{\begin{equation}}
\newcommand{\eeq}{\end{equation}}
\newcommand{\bea}{\begin{eqnarray}}
\newcommand{\eea}{\end{eqnarray}}
\newcommand{\im}{\textrm{im }}
\newcommand{\ket}[1]{| #1 \rangle}
\newcommand{\coh}{\mathcal{H}}
\title{\large \bf On the ground states of the $M_\ell$ models}
\date{\normalsize{\today}}
\author{\normalsize \sc{Liza Huijse}$^1$, and \sc{Christian Hagendorf}$^2$
\bigskip\\
{\normalsize
  \begin{minipage}{\textwidth}
  \begin{center}
  \textit{
  $^1$ Physics Department, SITP\\
382 Via Pueblo Mall,Varian Lab\\
Stanford University, Stanford CA, 94305-4060  
  \medskip\\
$^2$ Universit\'e Catholique de Louvain\\
  Institut de Recherche en Math\'ematique et Physique\\
  Chemin du Cyclotron 2, 1348 Louvain-la-Neuve, Belgium}
\bigskip\\
   \href{mailto:lhuijse@stanford.edu}{\texttt{lhuijse@stanford.edu}},
  \href{mailto:christian.hagendorf@uclouvain.be}{\normalsize \texttt{christian.hagendorf@uclouvain.be}}
    \end{center}
  \end{minipage}
}
} 
\definecolor{darkblue}{rgb}{0,0,.8}
\begin{document}
\maketitle

\begin{abstract}
We study the $M_\ell$ models for lattice fermions with supersymmetry introduced by Fendley, Nienhuis and Schoutens on one-dimensional chains. We determine the number of ground states as a function of the chain length as well as various boundary conditions by solving the corresponding cohomology problem. As an intermediate result we obtain the \textit{Cut (and Paste)} cohomology isomorphism, which maps the cohomology problem of chains, whose lengths differ by $\ell+2$ sites, onto each other. 
\end{abstract}

\section{Introduction}

The $M_\ell$ models for lattice fermions were introduced by Fendley, Nienhuis and Schoutens in \cite{fendley:03}. They describe fermions on graphs subjected to the \textit{exclusion constraint} which forbids connected particle clusters to contain more than $\ell$ particles. The interest in these models is that they allow an explicit realization of $\mathcal N=2$ supersymmetry on the lattice which leads to a Hamiltonian having all features of a typical quantum many-body problem such as particle hopping between nearby sites and local potential energies. The most studied case is $\ell=1$, corresponding to a model of fermions with nearest-neighbor exclusion. The original papers \cite{fendley:03_2,fendley:03} focus on one-dimensional homogeneous chains where the model is Bethe-ansatz solvable and it can be mapped to the XXZ spin chain with anisotropy $\Delta=-1/2$. Its relation to the continuum $\mathcal{N}=2$ superconformal field theory is well understood \cite{beccaria:05,huijse:11_2}. Spatial modulations in the interactions between the fermions lead to features that are close to the XYZ chain along the supersymmetric line \cite{bazhanov:05,mangazeev:10,bazhanov:06,blom:12}, and the ground states of the model are related to classically integrable equations \cite{fendley:10_1,beccaria:12}. In two dimensions the model exhibits a variety of physically interesting features such as charge frustration leading to extensive ground state entropy on some lattices \cite{fendley:05_2,JJ,engstrom,HvE,huijse:11_3} and a conjectured supertopological phase on others \cite{huijse:08_2, huijse:10_2}. The observed ground state degeneracies display interesting relations to combinatorics and cohomology theory. Indeed, Jonsson \cite{jonsson:06} computed the Witten index of the model on a torus by relating it to rhombus tilings. The tiling relation proved also to be a central tool in order to determine the exact number of ground states \cite{huijse:10}.

The models with $\ell \geq 2$ have so far only been considered on one-dimensional chains. The $M_2$ model on the chain is Bethe-ansatz solvable in the homogeneous case \cite{fendley:03} as well as in the inhomogeneous case as long as a hidden dynamical supersymmetry is preserved \cite{hagendorf:14}. The latter result can be generalized to the case of general $\ell$ as we will show in forthcoming work \cite{chlh:tbp}. The authors of \cite{fendley:03} argue that the $M_\ell$ models on periodic chains provide a lattice version of the $\ell$-th member of $\mathcal N=2$ superconformal minimal series. More recently, it was argued that away from the homogeneous case, the $M_\ell$ models with $\ell \geq 2$ support non-abelian topological excitations \cite{fokkema:15}. 

The purpose of this article is to determine the exact number of zero-energy ground states of the $M_\ell$ models on one-dimensional chains with various boundary conditions for general $\ell$. This is achieved through the analysis of the cohomology of the supercharges that define the model. In the next section, we define the model, introduce periodic and special boundary conditions, discuss the supercharges and their cohomology and review the relation between cohomology elements and zero-energy ground states. In \cref{sec:results}, we state the main results : the exact number of zero-energy states for periodic and special boundary conditions and an intriguing cohomology isomorphism that relates the cohomology of an $N$-site chain to that of a chain with $\ell+2$ more sites. The proofs of these results are presented in \cref{sec:proofs}. Some alternative proofs and many technical results are deferred to the appendices.

\section{The model}
\label{sec:model}
In this section, we recall the definition of the $M_\ell$ model \cite{fendley:03}. Moreover, we provide a short reminder of the relation to cohomology that we will use to determine the number of zero-energy states.

\subsection{Definition}
\paragraph{Hilbert space.} We consider spinless fermions on a one-dimensional chain of length $N$ with the exclusion rule that the number of consecutive occupied sites can at most be $\ell$. The chain can be open (a simple path graph) or closed (a cycle graph). Each site may either be empty or occupied by a fermion. We shall frequently depict particle configurations through their site occupation numbers: $0$ for an empty site, and $1$ for an occupied site. A typical configuration which is compatible with the exclusion rule for $\ell \geq 3$ is given by
\begin{equation*}
  101110011001
\end{equation*}
We call a sequence of $m$ consecutive occupied sites an $m$-cluster. Our example has two $1$-clusters, one $2$-cluster and one $3$-cluster if considered as a configuration of an open chain. For a closed chain the first and last site of the sequence are neighbors and hence there are two $2$-clusters and one $3$-cluster.

In addition to the usual periodic boundary conditions for closed chains and free boundary conditions for open chains, we will also consider special boundary conditions for open chains. Special boundary conditions, specified by $s = (c_1,c_N)$, are imposed by restricting the length of the connected particle cluster that contains the first site to be at most $c_1$ and the length of the connected particle cluster that contains the last site, site $N$, to be at most $c_N$, where $0 \leq c_1, c_N \leq \ell$. Note that $s=(\ell,\ell)$ corresponds to free boundary conditions. Furthermore, $s=(0,0)$ corresponds to free boundary conditions on a chain of length $N-2$. Incidentally, these special boundary conditions were independently considered in \cite{fokkema:15}.

For given $\ell$ (and $s = (c_1,c_N)$ for open chains) the allowed configurations label the basis vectors of the $M_\ell$ model's Hilbert space, which -- apart from the exclusion rules -- is a standard fermionic Fock space with canonical scalar product. We denote this Fock space for a chain of length $N$ with periodic $(p)$ or special $(s)$ boundary conditions by $V^{(p)}_{N}$ or $V^{(s)}_{N}$, respectively. When it is not important or clear from the context we will drop the label on $V^{(p/s)}_{N}$ that indicates the boundary conditions and sometimes also the label that indicates the chain length. The Fock space possesses a natural grading given by the fermion number. Let us denote by $V^{(p/s)}_{N,f}$ the subspace of $V^{(p/s)}_{N}$ with exactly $f$ fermions so that
\begin{equation*}
  V^{(p/s)}_{N} = \bigoplus_{f\geq 0} V^{(p/s)}_{N,f}.
\end{equation*}
We denote by $F$ the usual fermion number operator.

\paragraph{Supercharge and Hamiltonian.} The supercharge $Q$ is a nilpotent operator, i.e. $Q^2=0$, which inserts a fermion into the system so that $[F,Q]=Q$. Its adjoint $Q^\dagger$ removes a particle, i.e. $[F,Q^\dagger]=-Q^\dagger$. The action of $Q^{\dagger}$ on a simple basis vector is defined as follows: look at each site, and produce a new basis vector through removal of a particle (if possible), weight it by an amplitude $\lambda_{m,n}$ if it is the $n$-th member of an $m$-cluster, taking into account the usual fermionic string. Finally, take the sum over all contributions. The requirement $(Q^\dagger)^2=0$ (and thus $Q^2=0$) leads to the following constraints between the amplitudes \cite{fendley:03}:
\begin{equation}
  \lambda_{m,n}\lambda_{m-n,p-n}= \lambda_{m,p}\lambda_{p-1,n}, \quad 1\leq n < p\leq m.
   \label{eqn:compatibility}
\end{equation}
These difference equations can systematically be solved in terms of the parameters $\lambda_{m,1}$, provided that they are all non-zero. Indeed, defining $\mu_m = \prod_{j=1}^m \lambda_{j,1}$ for $m>0$ and $\mu_0=1$, we obtain
\begin{equation}
  \lambda_{m,n}=\frac{\mu_m}{\mu_{n-1}\mu_{m-n}}.
  \label{eqn:couplings}
\end{equation}
For simplicity, let us consider that the $\lambda_{m,1}$ are all real positive numbers. Given that one of them can be scaled to unity, the model depends thus on $\ell-1$ parameters. The Hamiltonian is defined as the anticommutator of the supercharge with its adjoint
\begin{equation}
  H=QQ^\dagger + Q^\dagger Q.
  \nonumber
\end{equation}
Its action on a given basis state leads to various hopping processes and particle swaps between neighboring particle clusters, as well as a potential energy. As opposed to the supercharges, the Hamiltonian conserves the fermion number $[H,F]=0$, and is therefore block-diagonal in the occupation number basis of $V$. Moreover it commutes with both the supercharge and its adjoint $[H,Q]=[H,Q^\dagger]=0$.

\subsection{Zero-energy states}
\paragraph{Relation to cohomology.} It follows from its definition that the Hamiltonian is a positive operator and therefore its eigenvalues are bounded from below by zero $E\geq 0$. Eigenstates with positive energy $E>0$ organize in doublets of the supersymmetry algebra $|\psi\rangle,Q|\psi\rangle$ with $Q^\dagger|\psi\rangle =0$ but $Q|\psi\rangle\neq 0$. The eigenvalue $E=0$ is however special. In fact, if $H$ possesses a zero-energy eigenstate $|\psi\rangle$ then it is automatically a ground state of the system. The requirement $H|\psi\rangle =0$ is equivalent to $Q|\psi\rangle = 0$ and $Q^\dagger|\psi\rangle=0$, i.e. the state forms a singlet. The main objective of this article is to determine the dimension of the space of such zero-energy states, which we refer to occasionally as the number of (linearly independent) ground states.

To achieve this we use the well-known fact that the zero-energy state space is isomorphic to the cohomology of $Q$. Let us introduce some terminology, which will be useful in the following, and then recall the correspondence. The Hilbert space of the model equipped with the supercharge defines quite naturally an ascending complex
\begin{equation*}
  V_{N,0}^{(p/s)} \overset{Q}{\longrightarrow} V_{N,1}^{(p/s)} \overset{Q}{\longrightarrow} \cdots \overset{Q}{\longrightarrow} V_{N,f-1}^{(p/s)} \overset{Q}{\longrightarrow} V_{N,f}^{(p/s)} \overset{Q}{\longrightarrow} V_{N,f+1}^{(p/s)} \overset{Q}{\longrightarrow} \cdots
\end{equation*}
Since $Q^2=0$, the image of $Q:V_{N,f-1}^{(p/s)}\to V_{N,f}^{(p/s)}$ (the $Q$-coboundaries) is a subspace of the kernel of $Q:V_{N,f}^{(p/s)}\to V_{N,f+1}^{(p/s)}$ (the $Q$-cocycles). The cohomology
\begin{equation*}
  \coh_Q^f(V_{N}) = \ker \{Q:V_{N,f}^{(p/s)}\to V_{N,f+1}^{(p/s)}\}/\im \{Q:V_{N,f-1}^{(p/s)}\to V_{N,f}^{(p/s)}\}
\end{equation*}
measures to which extent there are non-trivial elements in the kernel by taking the quotient by the image. We shall often just write $\coh^f_Q$ when it is clear that $Q$ acts on the complex defined through $V_{N}$. The cohomology of the complex is given by
\begin{equation*}
  \coh_Q = \bigoplus_{f\geq 0} \coh_Q^f.
\end{equation*}
By definition its elements can be represented by states $|\psi\rangle \in V_N^{(p/s)}$ with $Q|\psi\rangle=0$, so-called representatives, up to coboundaries $Q|\phi\rangle$ where $|\phi\rangle \in V_N^{(p/s)}$. We denote by $[|\psi\rangle]_{\coh_Q}\in \coh_Q$ the corresponding equivalence class in the cohomology. The trivial element  $[0]_{\coh_Q} \in \mathcal H_Q$ (zero) is thus represented by a coboundary $Q|\phi\rangle$.

Two linearly independent ground states of the Hamiltonian, $H$, correspond to two linearly independent elements of $\coh_Q$, and vice-versa \cite{witten:82}. Hence, without explicit diagonalization of $H$ one may determine the number of its zero-energy ground states by computing the dimension of $\coh_Q$. Intuitively, the correspondence between the zero-energy states and the cohomology can be understood as follows. Organize the Hilbert space, $V$, in terms of eigenstates of the Hamiltonian. Consider the kernel of $Q$. It contains all zero-energy states. Moreover, given a doublet $|\psi\rangle,Q|\psi\rangle$ only $Q|\psi\rangle$ is in the kernel. However, this state is also in the image of $Q$. Taking the quotient by this image amounts to setting them to zero and one is left only with the zero-energy states.

\paragraph{Independence of the parameters of the model.} We are thus interested in the dimension of $\coh_Q$. It is legitimate to ask whether it can change when the parameters $\lambda_{m,1},\,m=1,\dots,\ell,$ are varied. Let us show that as long as they are all non-zero this is not the case. To this end, consider an invertible transformation, $M$, on the Hilbert space $V$, and define a new supercharge through conjugation $Q'=MQM^{-1}$. It is known that $\dim \coh_{Q'}=\dim \coh_Q$ \cite{witten:82,masson:08}. In the present situation, the simple structure of the parameters \eqref{eqn:couplings} allows to find a diagonal transformation $M$ which trivializes the couplings. Suppose that $|\alpha\rangle$ is a simple basis state with $m_k$ clusters of length $k=1,\dots,\ell$. If we define $M|\alpha\rangle = \prod_{k=1}^\ell \mu_k^{-m_k}|\alpha\rangle$ then $Q'$ is a supercharge of the same type as $Q$, but with coupling constants $\lambda'_{m,n}=1$ for all $m,n$. $M$ is invertible as long as none of the $\mu_k$ vanishes. We conclude therefore that the number of ground states is independent of the choice of the model parameters provided that they are non-zero. If some of them vanish however, the number of ground states may be different: an example is the case where they all vanish. Then the Hamiltonian is identically zero and annihilates trivially the full Hilbert space.

\paragraph{General strategy and the tic-tac-toe lemma.} To determine the dimension of $\mathcal H_Q$, we follow the strategy of previous works and compute an approximation to $\coh_Q$ in a finite number of steps. Let us explain one such step. First, we introduce a bit of notation. Let $S=\{1,\dots,N\}$ be the collection of all sites of the chain. We divide it into two disjoint sets $S=S_1 \sqcup S_2$. The precise choice of the subdivision depends on the boundary conditions as we shall see below. Let us consider the restrictions of the supercharge to the two sublattices
\begin{equation*}
  Q_1= \left. Q\right|_{S_1}, \quad Q_2= \left. Q\right|_{S_2}.
\end{equation*}
The original supercharge is given as the sum $Q=Q_1+Q_2$. Likewise, the total fermion number $f=f_1+f_2$ is the sum of the fermion numbers on the sublattices $f_1,f_2$. Finally, one verifies that the two sublattice supercharges obey the anticommutation relation
\begin{equation*}
  Q_1Q_2+Q_2Q_1=0.
\end{equation*} 
The subdivision leads therefore naturally to a so-called double-complex illustrated in \cref{fig:doublecomplex}. Each row and column in this figure constitutes its own ascending complex. It is legitimate to ask whether $\coh_Q$ can be obtained from the cohomologies of these complexes. The answer is in general non-trivial. For our purposes, a particular result which is known as the \textit{tic-tac-toe lemma} will be sufficient:
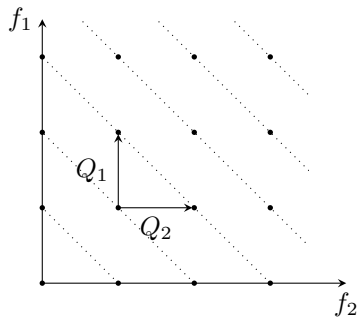
\begin{figure}[h]
  \centering
  \begin{tikzpicture}[>=stealth]
    \foreach \x in {0,1,...,3}
    { \foreach \y in {0,1,...,3}
        \fill (\x cm, \y cm) circle (1pt);
    }
    \draw[<->] (4,0) -- (0,0) -- (0,3.5);
    \draw[<->] (1,1.975) -- (1,1) -- (1.975,1);
    \draw (4,0) node[below] {$f_2$};
    \draw (0,3.5) node[left] {$f_1$};
    \draw (1,1.5) node [left] {$Q_1$};
    \draw (1.5,1) node [below] {$Q_2$};
    
    \begin{scope}
    \clip (0,3.5) rectangle (3.5,0);
    \foreach \x in {1,2,...,6}
    { 
      \draw[dotted] (\x,0) -- (0,\x);
    }
    \end{scope}

  \end{tikzpicture}
  \caption{Illustration of the double complex and action of the supercharges $Q_1$ and $Q_2$ associated to the sublattices $S_1$ and $S_2$. Each dot corresponds to the subspace of the model's Hilbert space with $f_1$ particles on $S_1$ and $f_2$ particles on $S_2$. The dotted diagonal lines correspond to constant total fermion number $f=f_1 +f_2$. }
  \label{fig:doublecomplex}
\end{figure}
\begin{lemma}[Tic-tac-toe]\label{prop:ttt}
  Given the double complex defined above, if $\coh_{21}\equiv \coh_{Q_2}(\coh_{Q_1})$ has entries in only one row, then $\coh_{Q_2}(\coh_{Q_1})$ is isomorphic to $\coh_Q$.
\end{lemma}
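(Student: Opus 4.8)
The plan is to prove the isomorphism by the elementary zig-zag diagram chase after which the lemma is named, using crucially that the double complex lives in the first quadrant, i.e. $f_1,f_2\geq 0$. Conceptually the statement is that the spectral sequence of the double complex, whose page after taking $\coh_{Q_1}$ and then $\coh_{Q_2}$ is exactly $\coh_{21}$, collapses as soon as this page is supported in a single row $f_1=r$: every higher differential connects two distinct rows, so with only one nonzero row each such differential has either vanishing source or vanishing target and must be zero. I would nevertheless carry out the argument directly on representatives, to avoid invoking spectral-sequence machinery, and I write $V_{f_1,f_2}$ for the bigraded component with $f_1$ fermions on $S_1$ and $f_2$ on $S_2$ (the dots of \cref{fig:doublecomplex}).

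First I would build a map $\coh_{21}\to\coh_Q$. A class in the distinguished row $r$ is represented by some $a_0\in V_{r,s}$ with $Q_1 a_0=0$ and $[Q_2 a_0]_{\coh_{Q_1}}=0$, so that $Q_2 a_0=-Q_1 a_1$ for some $a_1\in V_{r-1,s+1}$. One then chases diagonally downwards, solving $Q_2 a_{k-1}+Q_1 a_k=0$ step by step. At each stage the surviving term $Q_2 a_k\in V_{r-k,s+k+1}$ is automatically $Q_1$-closed, by $Q_2^2=0$ and $Q_1 Q_2=-Q_2 Q_1$, and is moreover $Q_2$-closed, so it represents a class of $\coh_{21}$ in row $r-k$. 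For $k\geq 1$ this row differs from $r$, the hypothesis forces the class to vanish, and hence $[Q_2 a_k]_{\coh_{Q_1}}$ is $Q_2$-exact; adjusting $a_k$ by an appropriate $Q_1$-cocycle then makes $Q_2 a_k$ genuinely $Q_1$-exact and yields $a_{k+1}$. The initial step $k=0$ is granted by the cocycle condition on $a_0$ itself. Since $f_1$ drops by one at each step and is bounded below by zero, the chase reaches row $0$ after finitely many steps, where the same vanishing forces $Q_2 a_r$ to be a $Q_1$-coboundary coming from $V_{-1,\cdot}=0$, hence $Q_2 a_r=0$. The telescoping sum $a=a_0+a_1+\cdots+a_r$ then obeys $Qa=0$ and defines the image class $[a]_{\coh_Q}$.

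It then remains to show this assignment is well defined and bijective. For well-definedness I would check that replacing $a_0$ by a $Q_1$-coboundary, or altering the auxiliary choices $a_k$ (each free up to a $Q_1$-cocycle), changes $a$ only by a total $Q$-coboundary, which is again a finite downward chase controlled by the vanishing of $\coh_{21}$ off row $r$. For bijectivity I would construct the inverse by the opposite move: given a total cocycle $b=\sum_j b_j$ on an antidiagonal $f_1+f_2=n$, I would repeatedly use $Qb=0$ together with the single-row concentration to strip off, by total coboundaries, all components of $b$ outside row $r$, arriving at a cohomologous cocycle whose row-$r$ part represents a class in $\coh_{21}$; finally one verifies that the two maps compose to the identity in both orders.

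I expect the genuine obstacle to lie not in the forward zig-zag but in the bookkeeping of the well-definedness and inverse steps: one must keep track of the residual freedom (a $Q_1$-cocycle) in every $a_k$ and confirm that the hypothesis, namely the vanishing of $\coh_{21}$ in all rows but $r$, is precisely what is needed both to terminate each chase and to trivialise every ambiguity, while first-quadrant boundedness is exactly what makes all the chases finite.
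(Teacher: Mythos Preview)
The paper does not actually prove this lemma: immediately after stating it, the authors write ``For the proof and the construction of the isomorphism, we refer the reader to \cite{BottTu82}.'' Your proposal is a correct sketch of precisely the standard zig-zag argument that Bott and Tu give, so in substance you are supplying the proof the paper outsources rather than offering an alternative to anything the paper does.
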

Here $\coh_{Q_2}(\coh_{Q_1}) \equiv \coh_{Q_2}(\coh_{Q_1}(V))$ is the cohomology of $Q_2$ acting on $\coh_{Q_1}$. For the proof and the construction of the isomorphism, we refer the reader to \cite{BottTu82}. Here, we will use that the isomorphism allows to find the number of ground states by computing the dimension of the two-step cohomology, $\coh_{Q_2}(\coh_{Q_1})$, which is often simpler to handle than the cohomology of $Q$. From the definition of $\coh_{Q_2}(\coh_{Q_1})$ it follows that is elements can be represented by states $|\psi\rangle \in V_N^{(p/s)}$ which obey
\begin{subequations}
\begin{align}
& Q_1 \ket{\psi}  = 0, \label{eqn:CondRepH21First} \\
& Q_2 \ket{\psi} = Q_1 \ket{\phi}.
\label{eqn:CondRepH21Second}
\end{align}
\label{eqn:CondRepH21}%
\end{subequations}%
The first statement simply reflects that all elements in $\coh_{Q_2}(\coh_{Q_1})$ are in $\coh_{Q_1}$ and thus in the kernel of $Q_1$. The second statement reflects that every element is in the kernel of $Q_2$ within $\coh_{Q_1}$, that is, $Q_2$ maps it to the trivial equivalence class of $\coh_{Q_1}$. The trivial equivalence class $[0]_{\coh_{21}}$ can be represented by states of the form
\begin{equation}
  |\psi\rangle = Q_2|\alpha\rangle + Q_1|\beta\rangle, \quad \text{with} \quad Q_1|\alpha\rangle = 0.
  \label{eqn:ZeroInH21}
\end{equation}

\section{Results}\label{sec:results}

In this work we obtain results on the zero-energy ground states of the $M_{\ell}$ models with various boundary conditions by computing the corresponding cohomology through an application of the tic-tac-toe lemma. For the $M_1$ model \cite{fendley:03_2} and more recently for the $M_2$ model \cite{hagendorf:14} the cohomology problem was analyzed for periodic closed and free open boundary conditions. Here we extend these results to general $\ell$. Furthermore, we include special boundary conditions, $s=(c_1,c_N)$, for open chains, where -- in addition to the overall constraint on the cluster length -- the lengths of the clusters that include the first or last site are restricted. We present the main results here, the proofs are postponed to the next sections and the appendix. 

Our results rely on the existence of a powerful cohomology isomorphism, which is an interesting result in its own right :

\begin{theorem}[Cut (and paste)]
\label{thm:cutandpaste}
The cohomology, $\coh^{f}_{Q}(V^{(p/s)}_{N})$, of a chain of length $N>2\ell+2$ with periodic boundary conditions (p) or special boundary conditions, $s=(c_1,c_N)$, at grade $f \geq \ell $ is isomorphic to the cohomology, $\coh^{f-\ell}_{Q}(V^{(p/s)}_{N'})$, of a chain of length $N' \equiv N-(\ell+2)$ with periodic boundary conditions (p) or special boundary conditions $s=(c_1,c_N' = c_N)$, respectively, at grade $f-\ell$.
\end{theorem}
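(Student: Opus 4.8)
The plan is to prove \cref{thm:cutandpaste} by an application of the tic-tac-toe lemma (\cref{prop:ttt}) to a subdivision that isolates a window of $\ell+2$ consecutive sites. Concretely, I would take $S_1$ to be a block $W$ of $\ell+2$ neighbouring sites placed in the bulk, far from both ends of the chain, and $S_2 = S\setminus W$ the remainder. A bulk window leaves both ends untouched, so the left end keeps the constraint $c_1$ and the right end keeps $c_N$, which is exactly the boundary data claimed in the theorem ($c_1$ unchanged, $c_N'=c_N$); for periodic boundary conditions the placement is immaterial by translation invariance. The intuition driving the whole argument is that, after passing to the cohomology of $Q_1=\left.Q\right|_{S_1}$, the window should be frozen into a single maximally packed configuration -- an isolated $\ell$-cluster flanked by empty buffer sites, $0\,\underbrace{1\cdots1}_{\ell}\,0$ -- carrying exactly $\ell$ fermions. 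Cutting this frozen block out of the chain and pasting the two remaining pieces together then realises the isomorphism with the $N'$-site chain and accounts for the grade shift by $\ell$.

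The technical core is the computation of $\coh_{Q_1}$ restricted to $W$. I would argue that $Q_1$, which only inserts fermions at sites of $W$, behaves on the window like the supercharge of a short chain whose effective boundary conditions are dictated by the occupation of the two neighbouring sites in $S_2$. The key mechanism is the standard insertion pairing: any basis configuration possessing an empty site inside $W$ that may be filled without violating the exclusion rule is cancelled in cohomology against the configuration with that site filled, so the only surviving classes are those in which every empty site of $W$ is blocked, i.e.\ filling it would create an $(\ell+1)$-cluster. On $\ell+2$ sites this should pin down a unique nontrivial class, living at grade $f_1=\ell$. Establishing this rigorously -- including the bookkeeping of the couplings $\lambda_{m,n}$ from \eqref{eqn:couplings} for clusters that straddle the boundary between $W$ and $S_2$ -- is where most of the work lies, and I expect to organise it either by a direct short-chain computation or by a secondary tic-tac-toe reduction inside $W$.

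Granting the localisation, the two-step cohomology $\coh_{21}=\coh_{Q_2}(\coh_{Q_1})$ is supported entirely in the single row $f_1=\ell$, which is precisely the hypothesis of \cref{prop:ttt}; hence $\coh_Q \cong \coh_{Q_2}(\coh_{Q_1})$. On that row $\coh_{Q_1}$ has collapsed $W$ to the frozen $\ell$-cluster with empty neighbours, so $Q_2$ acts only on $S_2$ and sees exactly the supercharge of the chain obtained by deleting $W$; since $f=f_1+f_2=\ell+f_2$, the total grade $f$ corresponds to grade $f-\ell$ on the shorter chain (matching the hypothesis $f\geq\ell$). Reading off representatives via \eqref{eqn:CondRepH21} and the trivial class via \eqref{eqn:ZeroInH21}, this identification is the announced cut-and-paste map: delete the $\ell+2$ frozen sites together with their $\ell$ fermions and glue the remainder into a chain of length $N'=N-(\ell+2)$, the inverse pasting the frozen block back in. I would then check that the map sends cocycles to cocycles and coboundaries to coboundaries and descends to a bijection on cohomology.

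The main obstacle, as indicated, is the localisation lemma for $\coh_{Q_1}$ together with the single-row condition. The delicate points are (i) controlling clusters that straddle the cut between $W$ and $S_2$, where the nonlocal structure of the couplings \eqref{eqn:couplings} enters, and (ii) ensuring that deleting the buffer-flanked $\ell$-cluster and rejoining the two pieces never produces a configuration violating the exclusion rule at the seam -- so that the map is well defined \emph{on cohomology} even where it is not naively defined on the full Fock space. The hypothesis $N>2\ell+2$ enters exactly here: it guarantees $N'>\ell$, so that the complementary chain is long enough to accommodate a genuine buffer on both sides of $W$ and the frozen block is truly isolated, which is what makes both the pairing argument and the gluing legitimate.
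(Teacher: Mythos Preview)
Your plan is essentially the paper's proof: isolate a window $W$ of $\ell+2$ consecutive sites as $S_1$, show that $\coh_{Q_1}$ localises to the frozen block $|\chi\rangle=|0\underbrace{1\cdots1}_{\ell}0\rangle$ at $f_1=\ell$, invoke tic-tac-toe, and then identify $\coh_{21}$ with $\coh_{\tilde Q}$ on the shorter chain. The one substantive difference is where you put the window. You place $W$ in the bulk so that the endpoints (and hence $c_1,c_N$) are untouched by construction; the paper instead takes $W=\{1,\dots,\ell+2\}$ at the left end. The paper's choice keeps $S_2$ connected for open chains and so avoids any gluing: the short-chain cohomology of \cref{lem:special} on $W$ with induced right boundary $\ell-k$ is nontrivial iff $c_1+(\ell-k)\geq\ell$, i.e.\ $k\le c_1$, which \emph{propagates} the left boundary condition $c_1$ directly onto the first site of $S_2$. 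This is the ``remarkable'' propagation the paper alludes to, and it makes the identification $|\psi'\rangle\in V^{(c_1,c_N)}_{\tilde N}$ immediate without a seam.

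Your bulk placement is not wrong, but it forces you to glue $L$ and $R$ across a seam, which for open chains reproduces exactly the complication that is unavoidable only in the periodic case. The point you flag as delicate---that $Q_2$ acting on $\coh_{Q_1}$ is not literally $\tilde Q$ on the glued chain---is real: near the seam $Q_2$ sees the zeros of $|\chi\rangle$ rather than the glued neighbour, and it can produce states with $m+k>\ell$. The paper resolves this with an explicit intertwining relation (its \cref{lem:Q2onG}): $Q_2(G|\psi'\rangle)=(-1)^\ell\bigl(G(\tilde Q|\psi'\rangle)+Q_1|\phi'\rangle\bigr)$, together with the observation (\cref{corr:useful2}) that the ``overflow'' states are $Q_1$-coboundaries. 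You will need the same lemma in your setting; once you have it, the check that $G$ descends to a bijection on cohomology is exactly the paper's \cref{prop:CohomIsomorph}. The coupling constants $\lambda_{m,n}$ you worry about are a non-issue: the paper establishes in \cref{sec:model} that $\dim\coh_Q$ is independent of the $\lambda$'s, so one may work with $\lambda_{m,n}\equiv 1$ throughout.
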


We shall prove this theorem through the explicit construction of an isomorphism between 
the cohomology for an $N$ site chain with boundary conditions $(p)$ or $(s)$ and that of an $N-(\ell+2)$ site chain with the same boundary conditions. The reason we call it \textit{cut (and paste)} is that the isomorphism is realized by cutting out the first $(\ell+2)$ sites and only for closed chains with periodic boundary conditions the ends are pasted together again. It is not obvious that this isomorphism exists : the propagation of the boundary conditions is quite remarkable.
It makes the periodic structure in the number of zero-energy ground states as a function of chain length very explicit. It would be interesting to investigate if this isomorphism can be used to reveal possible self-similarity properties of the ground states as the chain length is increased by steps of $\ell+2$, or even to prove the scale-free properties of ground-state correlation functions observed in \cite{fendley:10,fendley:10_1,beccaria:12}.We leave these questions to future studies. Here, our aim is to apply \cref{thm:cutandpaste} in order to find the number of zero-energy states of the $M_\ell$ model. For closed chains we obtain the following result :

\begin{theorem}
\label{prop:closedchain}
The Hamiltonian of the closed periodic chain with $N=n(\ell+2)+p+1$ sites, $n \geq 0$, has $1$ zero-energy ground state with fermion number $f=n\ell+p$ for $p=0,\dots,\ell$ and has $\ell+1$ zero-energy ground states with fermion number $f=(n+1)\ell$ for $p=\ell+1$.
\end{theorem}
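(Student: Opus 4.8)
The plan is to prove \cref{prop:closedchain} by induction on $n$, using \cref{thm:cutandpaste} as the inductive engine and a direct tic-tac-toe computation for the short chains that lie outside its range. Fix $p\in\{0,\dots,\ell+1\}$ and write $N=n(\ell+2)+p+1$ and $N'=N-(\ell+2)=(n-1)(\ell+2)+p+1$. The Cut (and paste) isomorphism gives $\coh^{f}_Q(V^{(p)}_N)\cong\coh^{f-\ell}_Q(V^{(p)}_{N'})$ for every $f\ge\ell$, provided $N>2\ell+2$. The claimed ground-state grade is $f^\ast=n\ell+p$ for $p\le\ell$ and $f^\ast=(n+1)\ell$ for $p=\ell+1$, so in both cases $f^\ast\ge\ell$ whenever $n\ge1$, and the shift $f\mapsto f-\ell$ maps $f^\ast$ exactly onto the grade predicted for the chain $N'$ at block count $n-1$. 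Hence, assuming the statement for $n-1$, the isomorphism reproduces the correct dimension ($1$, respectively $\ell+1$) at grade $f^\ast$ and, since by hypothesis $\coh_Q(V^{(p)}_{N'})$ is concentrated at a single grade, forces $\coh^f_Q(V^{(p)}_N)=0$ for all other $f\ge\ell$.

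It remains to control the grades $f<\ell$ that the isomorphism does not reach. Here I would use that any configuration with at most $\ell$ fermions automatically satisfies the exclusion rule, so that $V^{(p)}_{N,f}$ is the full $f$-particle fermionic Fock space on $N$ sites for every $f\le\ell$; since $N>\ell$ throughout the inductive range $n\ge1$, no cluster can wrap the whole cycle at such $f$, and this identification is clean. On this truncation $Q$ acts as the Koszul differential of a single nonzero vector, whose cohomology vanishes in degrees $1\le f\le\ell-1$ and whose degree-zero map is injective. Therefore $\coh^f_Q(V^{(p)}_N)=0$ for all $f\le\ell-1$, and together with the previous paragraph this concentrates the whole cohomology of the $N$-site chain at the single grade $f^\ast$ with the asserted dimension, completing the inductive step for $n\ge1$.

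The base of the induction consists of the chains that \cref{thm:cutandpaste} cannot shorten, namely those with $N\le2\ell+2$; tracking the reduction shows these are exactly the pairs $(n,p)=(0,p)$ with $0\le p\le\ell+1$ (lengths $1,\dots,\ell+2$) together with $(n,p)=(1,p)$ with $0\le p\le\ell-1$ (lengths $\ell+3,\dots,2\ell+2$). For each I would split the sites as $S=S_1\sqcup S_2$ and evaluate the two-step cohomology $\coh_{Q_2}(\coh_{Q_1})$, verifying through \cref{prop:ttt} that it has entries in a single row and hence computes $\coh_Q$. The generic outcome is a one-dimensional cohomology at grade $p$ (for $n=0$) or $\ell+p$ (for $n=1$); the exceptional case is the closed chain of length $\ell+2$ at $f=\ell$, where the two empty sites may occupy any of $\binom{\ell+2}{2}$ positions and the surviving cohomology turns out to be $(\ell+1)$-dimensional.

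I expect the main obstacle to be precisely these base-case computations, and above all the length-$(\ell+2)$ chain: exhibiting a basis of $\ell+1$ representatives of $\coh_{Q_2}(\coh_{Q_1})$ at $f=\ell$, checking that none is a coboundary of the form \eqref{eqn:ZeroInH21}, and confirming that no entries appear in other rows so that \cref{prop:ttt} genuinely applies. A secondary subtlety is making the low-grade vanishing airtight on the degenerate short cycles with $N\le\ell$, where a fully occupied arc can wrap around and the naive Koszul picture breaks; the few offending configurations must be inspected directly. Once all base cases are secured, the induction via \cref{thm:cutandpaste} propagates them to arbitrary $N$ with no further work.
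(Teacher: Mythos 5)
Your overall route is the paper's own: reduce long chains with \cref{thm:cutandpaste} and anchor the recursion on short-chain cohomologies, which are exactly the paper's \cref{lem:periodicshort1} (lengths $1,\dots,\ell+2$) and \cref{lem:periodicshort2} (lengths $\ell+3,\dots,2\ell+2$); your identification of the base cases and the grade bookkeeping agree with the paper's argument in \cref{sec:closedchain}. One genuinely distinct, and correct, ingredient is your treatment of grades $f<\ell$: since \cref{thm:cutandpaste} is stated only for $f\geq\ell$, something must rule out low-grade cohomology, and your observation that $V^{(p)}_{N,f}$ is unconstrained for $f\leq\ell$ (when $N>\ell$), so that $Q$ restricts there to multiplication by a nonzero vector in an exterior algebra with vanishing cohomology, does this cleanly (after normalizing the couplings to $1$, which the paper's parameter-independence argument permits). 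The paper never makes this step explicit; it tacitly uses the stronger content of its proof of \cref{thm:cutandpaste}, namely \cref{cor:iso} together with \cref{prop:structureHQ1}, which show that the \emph{entire} cohomology $\coh_Q$ is concentrated at grades $\geq\ell$ and isomorphic to $\coh_{\tilde Q}$ as a whole. Your version is self-contained given only the statement of the theorem, which is a point in its favor.

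The genuine gap is that the base cases, which you correctly recognize must be computed by hand, are asserted rather than proven, and they are the technical heart of the matter: they occupy all of \cref{app:short}. Moreover, you misjudge where the difficulty sits. The length-$(\ell+2)$ chain you single out as the main obstacle is in fact the easy one: taking $S_1$ to be a single site, $\coh_{Q_1}$ is spanned by the $\ell+1$ states in which the $S_1$ site is empty and flanked by clusters of lengths $a$ and $\ell-a$, all at grade $\ell$; since $\coh_{Q_1}$ sits in a single grade, the induced action of $Q_2$ vanishes and $\coh_{Q_2}(\coh_{Q_1})$ is immediately $(\ell+1)$-dimensional, so \cref{prop:ttt} applies with no further checks. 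The hard cases are the periodic chains with $\ell+3\leq N\leq 2\ell+2$, i.e.\ your pairs $(n,p)=(1,p)$ with $0\leq p\leq\ell-1$: there $\coh_{Q_1}$ is not concentrated on a single configuration type, and the paper's proof of \cref{lem:periodicshort2} requires splitting off $\ell$ consecutive sites, importing the special-boundary-condition results of \cref{lem:special}, separating two families of representatives, and running an induction over cluster lengths with a doublet decomposition to show that exactly one class survives, at grade $N-3$. Nothing in your sketch anticipates this; ``the generic outcome is a one-dimensional cohomology'' is precisely what needs proof, and without it your induction has nothing to stand on for $p\leq\ell-1$.
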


One easily verifies that this result is compatible with the results obtained for the Witten index in \cite{fendley:03} and with the zero-energy ground state results for $\ell=1,2$ \cite{fendley:03_2,hagendorf:14}. For open chains with special boundary conditions we obtain the following.
\begin{theorem}\label{prop:specialchain}
The Hamiltonian of the open chain with special boundary conditions given by $s=(c_1,c_N)$ and length $N=n(\ell+2)+p+1$ with $n \geq 1$ and $0 \leq p \leq \ell+1$, has one zero-energy ground state for (i) $c_1,c_N \geq p$ and $c_1+c_N \leq \ell+p$ with fermion number $f=n\ell+p$ when $0 \leq p \leq \ell$ and one for (ii) $c_1,c_N < p$ and $c_1+c_N \geq p-1$  with fermion number $f=n\ell+p-1$ when $1 \leq p \leq \ell+1$. It has no zero-energy state otherwise. 
\end{theorem}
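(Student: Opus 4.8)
The plan is to prove the statement by induction on $n$, using the Cut (and paste) isomorphism of \cref{thm:cutandpaste} to climb from a short base chain up to arbitrary length, and to treat the base chain directly with the tic-tac-toe lemma \cref{prop:ttt}. Fix the boundary data $s=(c_1,c_N)$ and abbreviate $N_m=m(\ell+2)+p+1$. For $m\geq 2$ all of the lengths $N_2,\dots,N_m$ exceed $2\ell+2$, so \cref{thm:cutandpaste} applies at every step and provides the grade-shifting isomorphism
\begin{equation*}
\coh^{f}_{Q}(V^{(s)}_{N_m}) \;\cong\; \coh^{f-\ell}_{Q}(V^{(s)}_{N_{m-1}}), \qquad f\geq \ell,
\end{equation*}
in which the boundary data $s$ and the offset $p$ are carried along unchanged. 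Since the two candidate grades $m\ell+p$ and $m\ell+p-1$ both exceed $\ell$, the isomorphism can be iterated down to the grades $\ell+p$ and $\ell+p-1$ of the base chain $V^{(s)}_{N_1}$ of length $\ell+p+3$; because $s$ and $p$ are preserved, the conditions (i) and (ii) survive the reduction verbatim. The induction step thus reduces the whole theorem to two tasks: solving the base chain, and controlling the grades that the isomorphism cannot reach.

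For the base chain I would invoke \cref{prop:ttt}. Using the reflection symmetry $c_1\leftrightarrow c_N$ of the open chain I may assume $c_1\leq c_N$. I would then choose a subdivision $S=S_1\sqcup S_2$ adapted to the boundary---for instance isolating a single site or the first cluster at the $c_1$ end---so that $\coh_{Q_1}$ is easy to read off: on such a small set $Q_1$ is an isomorphism except at configurations where particle insertion is obstructed by the exclusion rule or by the boundary constraint. Passing to the two-step cohomology $\coh_{Q_2}(\coh_{Q_1})$ via the representative conditions \eqref{eqn:CondRepH21} and the triviality criterion \eqref{eqn:ZeroInH21}, I expect the surviving representative to be a single crystalline configuration built from $\ell$-clusters separated by empty sites, with the surplus $p+1$ sites hosting a short defect cluster. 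Whether this defect can be accommodated against the two boundary constraints $c_1,c_N$ is precisely what splits the answer into regimes: when both boundary clusters are long enough to host the defect one obtains the grade-$(\ell+p)$ representative of case (i) under $c_1,c_N\geq p$ and $c_1+c_N\leq \ell+p$; when they are too short, the defect must shed one particle, giving the grade-$(\ell+p-1)$ representative of case (ii) under $c_1,c_N<p$ and $c_1+c_N\geq p-1$; and when neither configuration fits, the cohomology is trivial. Since (i) and (ii) impose incompatible inequalities on $(c_1,c_N)$, at most one holds, which matches the claimed count of a single ground state.

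Two points then remain. First, \cref{thm:cutandpaste} governs only the grades $f\geq\ell$, so the inductive step never touches the grades $f<\ell$ of $V^{(s)}_{N_m}$; I would rule these out with a separate vanishing lemma stating that $\coh^{f}_{Q}(V^{(s)}_{N})=0$ whenever $f<\ell$ and $N\geq\ell+3$. At such low filling the complex ought to be exact, since isolating any site away from the boundary yields a restricted charge $Q_1$ with no cohomology---there is always room to insert a particle---and \cref{prop:ttt} then forces $\coh_Q=0$. The base chain itself is computed at all grades by the tic-tac-toe argument above, so the induction closes; and for the short cases $p\in\{\ell,\ell+1\}$ the base chain still exceeds $2\ell+2$, so one more application of \cref{thm:cutandpaste} reduces it to a chain of length $\ell+1$ or $\ell+2$ whose cohomology can be read off by enumeration.

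I expect the real obstacle to be the base-chain tic-tac-toe computation, since this is where the non-obvious structure is generated: one must follow exactly how the exclusion rule interacts with the two boundary lengths $c_1,c_N$, pin down the surviving representative, and check that it is closed modulo coboundaries. This is what should produce the combined inequalities $c_1+c_N\leq\ell+p$ and $c_1+c_N\geq p-1$ and the one-unit drop in fermion number between the two cases. By comparison, the ascent in $n$ and the low-grade vanishing are structural and should parallel the closed-chain analysis behind \cref{prop:closedchain}.
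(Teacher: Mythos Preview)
Your proposal is correct and follows essentially the same route as the paper: iterate \cref{thm:cutandpaste} down to a base chain of length $\ell+p+3$ (one step further to length $p+1$ when $p=\ell,\ell+1$), and then solve the short chain by tic-tac-toe. Two small remarks. First, your separate vanishing lemma for $f<\ell$ is not actually needed: the proof of \cref{thm:cutandpaste} (via \cref{prop:structureHQ1} and \cref{cor:iso}) already shows that for $N>2\ell+2$ all of $\coh_Q$ is concentrated in grades $f\geq\ell$, so the isomorphism captures the entire cohomology at each inductive step. Second, for the base-chain computation the paper's specific choice is $S_1=\{1,N\}$ (both endpoints), which reduces the problem on $\ell+3\leq N\leq 2\ell+2$ sites to the already-solved case $1\leq N'\leq \ell+2$ of \cref{lem:special}; your ``crystalline plus defect'' picture is closer to the equispaced-$S_1$ argument of \cref{sec:freebc}, which works for free boundaries but is less convenient for general $(c_1,c_N)$.
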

\begin{figure}[ht]
\begin{center}
\includegraphics[width = 0.5\textwidth]{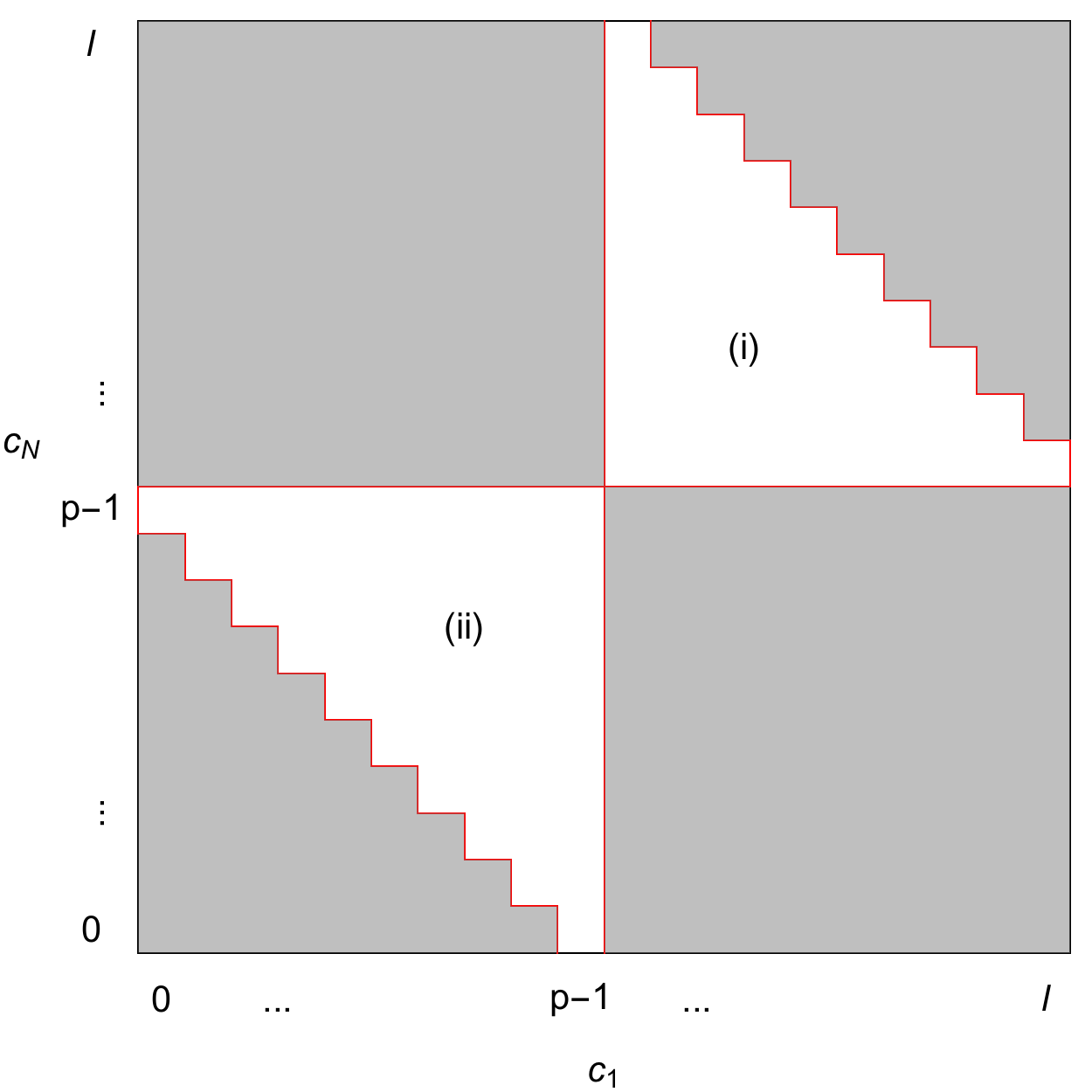}
\caption{We illustrate for given $p$ the special boundary conditions for which there are zero-energy ground states. The figure denotes a 2d grid with each square labeled by $(c_1,c_N)$ and $0<c_1,c_N \leq \ell$. There is a unique zero-energy state for the boundary conditions that satisfy (i) $c_1,c_N \geq p$ and $c_1+c_N \leq \ell+p$ and (ii) $c_1,c_N < p$ and $c_1+c_N \geq p-1$ corresponding to the white regions indicated by (i) and (ii), respectively. The regions corresponding to the boundary conditions for which there are no zero-energy ground states are gray shaded. \label{fig:special}}
\end{center}
\end{figure}
We illustrate this result pictorially in \cref{fig:special}. Note that the result for the open chain with free boundary conditions is included in the proposition as it corresponds to special boundary conditions with $s=(\ell,\ell)$. For this particular case there are zero-energy ground states only if $p=\ell$ or $p=\ell+1$. This is again consistent with previous results for $\ell=1,2$ \cite{fendley:03_2,hagendorf:14}. Furthermore, the special boundary conditions for $\ell=2$ were also considered in \cite{fokkema:15}, since they play a role in selecting out certain topological sectors. Again the results are consistent. Finally, the results for $n=0$ can be found in \cref{lem:special} below.

The proofs of these theorems are obtained in several steps, invoking short chain results and a few intermediate results. However, for two special cases we found short and simple alternative proofs. These simpler proofs, which exist for closed periodic chains with length $N=n(\ell+2)$ and for open chains with free boundary conditions and general length $N$, can be found in \cref{app:altproofs}. For the general case, the \textit{cut (and paste)} isomorphism gives us the dimension of the space of zero-energy states for chains of length $N>2\ell+2$ provided that we have the solutions for $1 \leq N \leq 2\ell+2$. The computation of these short chain cohomologies is quite technical
and is therefore deferred to \cref{app:short}.

\section{Proofs}\label{sec:proofs}

In this section we prove the theorems stated in the previous section. In particular, we construct explicitly the \textit{cut (and paste)} cohomology isomorphism for the $M_{\ell}$ models,  \cref{thm:cutandpaste}, maps the cohomology of $Q$ on a chain of length $N$ to the cohomology of $Q$ on a chain of length $N'=N-\ell-2$. This isomorphism is essential in the proofs of the statements on the zero-energy ground states of the $M_{\ell}$ models given in \cref{prop:closedchain,prop:specialchain}, since it allows us to obtain the cohomology of an $N$-site open or closed chain for any  $N>2\ell+2$ provided that we have the solution for chains of lengths $1\leq N \leq 2(\ell+1)$. These short chain results are given in \cref{sec:shortlems}. This is followed by the proof of \cref{thm:cutandpaste} in \cref{sec:iso}. Finally, the proofs of \cref{prop:closedchain,prop:specialchain} are given in sections \cref{sec:closedchain,sec:specialchain}, respectively. 

\subsection{Short chains}
\label{sec:shortlems}
The propositions stated in this section serve in the proofs of the \textit{cut (and paste)} cohomology isomorphism and our results on ground states of the $M_\ell$ models. Their proofs can be found in \cref{app:short}.

\begin{proposition}\label{lem:special}
The cohomology, $\coh_Q (V^{(s)}_N)$, of a chain of length $1\leq N\leq \ell+2$ with special boundary conditions given by $s=(c_1,c_N)$ with $0\leq c_1,c_N \leq \ell$ has dimension one
when (i) $c_1=N-1$ and $c_N > N-1$ or $c_N=N-1$ and $c_1 \geq N-1$ for $1\leq N\leq \ell+1$ and when (ii) $c_1,c_N \leq N-2$ and $c_1+c_N \geq N-2$ for $2 \leq N \leq \ell+2$.  Otherwise, its dimension is zero. The number of particles of its non-trivial elements is (i) $N-1$ and (ii) $N-2$, respectively, and the non-trivial elements can be represented by
\begin{equation*}
  \text{(i)}\quad \ket{01\dots1} \quad \text{and} \quad \text{(ii)}\quad \ket{01\dots10}. 
\end{equation*}
\end{proposition}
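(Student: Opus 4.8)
The plan is to compute $\coh_Q(V^{(s)}_N)$ directly, exploiting the fact that for $1\leq N\leq\ell+2$ the space $V^{(s)}_N$ is small and completely explicit. First I would invoke the parameter independence established above and set all amplitudes to $\lambda_{m,n}=1$, so that $Q$ simply adds a fermion, with the usual sign, at every site where the insertion yields an allowed configuration. In this range the bulk exclusion rule is almost inactive: it forbids nothing for $N\leq\ell$, forbids only the fully occupied chain for $N=\ell+1$, and forbids only the three densest configurations for $N=\ell+2$; all remaining restrictions come from the caps $(c_1,c_N)$ on the clusters touching sites $1$ and $N$. Two structural observations guide the computation. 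First, the cohomology must concentrate near maximal filling, since any sufficiently sparse configuration admits a further valid insertion and hence cannot be closed. Second, the candidate representatives carry clusters of length $N-1$ (for $\ket{01\dots1}$) and $N-2$ (for $\ket{01\dots10}$), and requiring these to respect the bulk rule $\leq\ell$ already explains the stated ranges $1\leq N\leq\ell+1$ and $2\leq N\leq\ell+2$.

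To turn this into a clean reduction I would peel the boundary sites one at a time using the tic-tac-toe lemma (\cref{prop:ttt}) with the single-site splitting $S_1=\{1\}$, $S_2=\{2,\dots,N\}$. The computation of $\coh_{Q_1}$ is immediate: for each fixed configuration $\beta$ on sites $2,\dots,N$ the pair $\ket{0\beta}\to\ket{1\beta}$ forms a two-term complex, exact whenever the insertion at site $1$ is allowed and contributing the single class $[\ket{0\beta}]$ otherwise. That insertion is blocked precisely when the occupied run beginning at site $2$ has length at least $c_1$, so $\coh_{Q_1}$ is spanned by the configurations with site $1$ empty whose leading cluster saturates the left cap. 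Running the analogous reduction on site $N$ and checking that the resulting $\coh_{Q_2}(\coh_{Q_1})$ sits in a single row then identifies it with $\coh_Q$. Carried out, the only configurations surviving both reductions are the dense states $\ket{01\dots1}$ (and its reflection $\ket{1\dots10}$) in grade $N-1$ and $\ket{01\dots10}$ in grade $N-2$; the same conclusion can be reached via an explicit contracting homotopy that removes a particle from a saturating boundary cluster.

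It then remains to decide, case by case in $(N,c_1,c_N)$, which candidate is simultaneously \emph{valid}, \emph{closed}, and \emph{non-exact}. For $\ket{01\dots1}$ the right cluster has length $N-1$, so validity forces $c_N\geq N-1$; closedness requires the lone insertion at site $1$ (which would complete the full chain of length $N$) to be blocked, i.e.\ $c_1\leq N-1$ or $c_N\leq N-1$ or $N>\ell$; and non-exactness forces $c_1\geq N-1$, since otherwise the state $\ket{0\,1\dots1}$ is hit from below by pushing the cluster rightward. Combining these three requirements yields exactly the asymmetric condition (i), the equality $c_1=N-1$ appearing precisely because a large $c_1$ together with $c_N\geq N$ would make the full chain allowed and destroy closedness. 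For $\ket{01\dots10}$ the cluster touches neither boundary, so validity is automatic; closedness forces both boundary insertions to be blocked, i.e.\ $c_1,c_N\leq N-2$; and non-exactness turns out to require $c_1+c_N\geq N-2$, giving condition (ii). The two regimes are disjoint, and the relation $Q\ket{01\dots10}=\pm\ket{1\dots10}\pm\ket{01\dots1}$ (with each term present only when valid) shows both orientations of the type-(i) representative are homologous while confirming that $\ket{01\dots10}$ ceases to be closed exactly when one enters regime (i).

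I expect the main obstacle to be the non-exactness step, in particular the sharp threshold $c_1+c_N\geq N-2$ for case (ii). Unlike closedness and validity, which are read off from a single insertion, non-exactness is a statement about the full image of $Q$ in the relevant grade as a subspace: one must describe this image explicitly and show it omits precisely a one-dimensional complement spanned by the claimed representative, which requires tracking how the lower-grade sources with a single gap in the cluster map to several targets and cancel against one another. The attendant fermionic signs and the coupled roles of $c_1$, $c_N$ and the bulk rule at $N=\ell+1,\ell+2$ are what make the bookkeeping delicate; the reflection symmetry $i\leftrightarrow N+1-i$, which exchanges $c_1\leftrightarrow c_N$, can be used to halve the number of cases that need to be treated explicitly.
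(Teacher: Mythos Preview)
Your approach is essentially the same as the paper's: both use the tic-tac-toe lemma with $S_1$ taken to be boundary site(s), reducing the problem to a single-site cohomology computation (empty state survives exactly when insertion is blocked by an adjacent saturated cluster). The paper in fact uses precisely your two splittings---$S_1=\{1\}$ for the unconstrained and $c_1=N-1$ cases, and $S_1=\{1,N\}$ for the case $c_1,c_N\leq N-2$---so the skeleton of your argument matches.

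The one substantive organizational difference is in how the vanishing is established when $c_1,c_N\leq N-2$ and $c_1+c_N<N-2$ (your case~(ii) fails, cohomology is zero). You propose to verify non-exactness of $\ket{01\dots10}$ directly by analyzing the image of $Q$ in grade $N-2$, and you correctly flag this as the delicate step. The paper sidesteps this entirely: after peeling both endpoints, the representatives of $\coh_{Q_1}$ all have the form $\ket{0\,1^{c_1}}\otimes\ket{\psi'}\otimes\ket{1^{c_N}\,0}$, and $Q_2$ acting on $\coh_{Q_1}$ is seen to be the supercharge of an \emph{unconstrained} chain of length $N'=N-2-c_1-c_N$, whose cohomology was already shown trivial in the first step. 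This recursive reduction is what buys the paper a clean treatment of exactly the threshold $c_1+c_N\geq N-2$ that you identify as the main obstacle---no sign bookkeeping or image computation is needed. Your direct route would work too, but the paper's reduction is shorter and avoids the case analysis you anticipate.
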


\begin{proposition}\label{lem:specialshort2}
The cohomology, $\coh_Q (V^{(s)}_N)$, of a chain of length $\ell+ 3 \leq N\leq 2\ell+2$ with special boundary conditions given by $s=(c_1,c_N)$ with $0\leq c_1,c_N \leq \ell$ has dimension one when (i) $c_1,c_N \geq N-\ell-3$ and $c_1+c_N < N-2$ and when (ii) $c_1,c_N < N-\ell-3$ and $c_1+c_N \geq N-\ell-4$ for $\ell+3 < N \leq 2\ell+2$. Otherwise, its dimension is zero. The number of particles of the non-trivial elements is (i) $N-3$ and (ii) $N-4$, respectively.
\end{proposition}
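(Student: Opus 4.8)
The plan is to compute $\coh_Q(V^{(s)}_N)$ directly by means of the tic-tac-toe lemma (\cref{prop:ttt}), reducing everything to the short-chain result of \cref{lem:special}. I would split the sites $S=\{1,\dots,N\}$ into the leftmost block $S_1=\{1,\dots,\ell+2\}$ and the remainder $S_2=\{\ell+3,\dots,N\}$, which has $k\equiv N-\ell-2$ sites with $1\le k\le \ell$, and write $Q=Q_1+Q_2$. The first step is to compute $\coh_{Q_1}$. For a fixed configuration $\sigma$ of $S_2$, $Q_1$ is the supercharge of the $M_\ell$ model on the length-$(\ell+2)$ chain $S_1$ with left boundary parameter $c_1$ and an \emph{effective} right boundary parameter $c^{\mathrm{eff}}$ dictated by $\sigma$: if the leading cluster of $\sigma$ (the run of occupied sites starting at site $\ell+3$) has length $b$ and does not reach site $N$, then only the bulk exclusion rule binds and $c^{\mathrm{eff}}=\ell-b$, whereas if $\sigma=\ket{1\cdots1}$ is fully packed its cluster reaches the right boundary and $c^{\mathrm{eff}}=c_N-k$. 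Since $S_1$ has length $\ell+2$, only case (ii) of \cref{lem:special} can occur, so $\coh_{Q_1}$ is one-dimensional, represented by $\ket{0\,1\cdots1\,0}_{S_1}$ with a single $\ell$-cluster, precisely when $c_1+c^{\mathrm{eff}}\ge \ell$, and vanishes otherwise.

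The second step is to compute $\coh_{Q_2}$ on $\coh_{Q_1}$. Because every representative carries the same $\ell$-cluster on $S_1$, the space $\coh_{Q_1}$ is concentrated in the single row $f_1=\ell$, so the hypothesis of \cref{prop:ttt} holds and $\coh_{Q_2}(\coh_{Q_1})\cong\coh_Q$; in particular the total fermion number of any class equals $\ell$ plus its $S_2$-grade. Translating the condition $c_1+c^{\mathrm{eff}}\ge\ell$ into the language of $S_2$, the induced complex is that of the $M_\ell$ model on the length-$k$ chain $S_2$ with special boundary conditions $(c_1,c_N)$, with one crucial modification: the fully packed configuration $\ket{1\cdots1}$ survives in $\coh_{Q_1}$ not under the naive rule $k\le\min(c_1,c_N)$ but only when $c_1+c_N\ge\ell+k=N-2$. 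Thus when $c_1,c_N\ge k$ yet $c_1+c_N<N-2$, the induced complex is the length-$k$ chain with its top-grade configuration $\ket{1\cdots1}$ \emph{deleted}.

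The cohomology of the induced complex then follows from \cref{lem:special} for the length-$k$ chain together with a short computation of the effect of deleting the top cell. When $\ket{1\cdots1}$ is present the answer is read off directly: case (ii) of \cref{lem:special}, valid for $c_1,c_N\le k-2$ and $c_1+c_N\ge k-2$, yields a class of $S_2$-grade $k-2$, i.e. total fermion number $N-4$, which is exactly case (ii) of the proposition; its case (i) yields a class of $S_2$-grade $k-1$, i.e. total fermion number $N-3$. When instead $\ket{1\cdots1}$ is deleted (the regime $c_1,c_N\ge k$, $c_1+c_N<N-2$), the standard cohomology vanishes, but since the unique top-grade cell is hit surjectively, removing it creates exactly one new class at $S_2$-grade $k-1$, again of total fermion number $N-3$. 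I would then check that the union of these grade-$(k-1)$ contributions is precisely the region $c_1,c_N\ge N-\ell-3$, $c_1+c_N<N-2$ of case (i), and that the grade-$(k-2)$ contribution is precisely region (ii), thereby confirming both the dimensions and the stated particle numbers.

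The main obstacle is the \emph{propagation of the boundary condition} encoded in $c^{\mathrm{eff}}$: the constraint imposed on $S_1$ by the far boundary $c_N$ is felt only through the fully packed configuration of $S_2$, so the induced complex is not an honest special-boundary chain but a length-$k$ chain with a single anomalous top cell. Getting the bookkeeping of this exceptional cell right -- matching the deletion regime to the precise inequalities of case (i) and checking that it dovetails with the ordinary \cref{lem:special} contributions without overlap or omission -- is the delicate part. I expect the degenerate endpoint $N=\ell+3$ (where $k=1$, $S_2$ is a single site, and case (ii) is absent) to require separate attention as a consistency check.
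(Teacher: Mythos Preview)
Your approach is correct but takes a genuinely different route from the paper. The paper chooses $S_1=\{1,N\}$ (the two endpoints) and $S_2$ the middle $N-2=\ell+p$ sites; the single-site cohomology forces clusters of length $c_1$ and $c_N$ at the two ends of $S_2$, and the residual $Q_2$-problem becomes a short open chain with effective boundary parameters $(\ell-c_1,\ell-c_N)$, to which \cref{lem:special} applies directly after a case split on whether $c_1+c_N\ge N-2$. Your decomposition $S_1=\{1,\dots,\ell+2\}$ is instead the one underlying the Cut-and-paste isomorphism (compare \cref{prop:structureHQ1}), pushed into the range $\ell+3\le N\le 2\ell+2$ where \cref{thm:cutandpaste} itself does not apply. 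The obstruction you isolate --- that the right boundary $c_N$ reaches back to $S_1$ only through the fully packed $S_2$-configuration, giving $c^{\mathrm{eff}}=c_N-k$ rather than $\ell-k$ --- is exactly why the paper needs $N>2(\ell+1)$ for Cut-and-paste. Treating this as a single anomalous top cell and noting (via the long exact sequence, since $\ket{1\cdots1}$ spans a one-dimensional subcomplex of the acyclic unconstrained complex) that its deletion produces precisely one new class at grade $k-1$ is sound. What you gain is a transparent picture of how Cut-and-paste degenerates for short chains; what you pay is the extra bookkeeping of merging the ``deleted'' regime $c_1,c_N\ge k$ with the \cref{lem:special}-case-(i) regime $c_1'=c_N'=k-1$ into the single region~(i). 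The paper's symmetric endpoint decomposition avoids that merge by reducing uniformly to \cref{lem:special}, but loses the direct link to the Cut-and-paste mechanism.
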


\begin{proposition}\label{lem:periodicshort1}
The cohomology, $\coh_Q (V^{(p)}_N)$, of a chain with periodic boundary conditions has dimension one for $1\leq N\leq \ell+1$ and dimension $\ell+1$ for $N=\ell+2$. The non-trivial elements contain $N-1$ particles.
\end{proposition}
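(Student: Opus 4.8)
The plan is to recognize the short periodic chain as a \emph{truncated Koszul complex}. As noted in the text, the diagonal conjugation $M$ lets us set all couplings $\lambda_{m,n}=1$ without changing $\dim\coh_Q$; in this gauge $Q$ inserts a fermion at every admissible empty site with the standard Jordan--Wigner sign. Identifying the Fock space with the exterior algebra $\bigwedge\mathbb{C}^N$ (site $i\leftrightarrow$ generator $e_i$, a configuration $\leftrightarrow$ the corresponding wedge monomial), the unrestricted sum $\sum_i c_i^\dagger$ is exactly left multiplication by $\omega=e_1+\cdots+e_N$, i.e.\ the Koszul differential $d=\omega\wedge\,\cdot\,$. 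So $Q$ is $d$ restricted to the admissible configurations, and the first task is to pin down that admissible subspace.

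Next I would determine the admissible configurations for $N\le\ell+2$. The crucial point is that on a closed chain the fully occupied state wraps around with no gap, so its cluster has no endpoints and must be excluded (retaining it in fact kills the cohomology, as one sees already for $N=2$). Granting this, a configuration with $k$ holes has longest cluster $\le N-k$, so for $N\le\ell+1$ every non-full configuration is admissible (grades $f=0,\dots,N-1$), whereas for $N=\ell+2$ the single-hole configurations have a cluster of length $\ell+1>\ell$ and are also forbidden, leaving grades $f=0,\dots,N-2$. Hence the complex is the initial segment $\bigwedge^{\le T}\mathbb{C}^N$ with $T=N-1$ for $N\le\ell+1$ and $T=N-2$ for $N=\ell+2$.

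The key step to verify — and where $N\le\ell+2$ is used essentially — is that on this segment $Q$ coincides with the \emph{unrestricted} differential $d$, truncated by $d|_{\bigwedge^T}=0$. Below the top grade every insertion is admissible: from $f\le T-1$ one keeps at least two holes (for $N=\ell+2$) or at least one hole (for $N\le\ell+1$), so the longest cluster after insertion is $\le N-2=\ell$, respectively $\le N-1\le\ell$, hence allowed. At the top grade $f=T$ every insertion would reach the forbidden grade $T+1$ (the full state, resp.\ a single-hole state), so $Q|_{\bigwedge^T}=0$. For larger $N$ this breaks down — interior insertions get blocked by the cluster constraint — which is precisely why the clean Koszul picture is special to these short chains; this is the main thing one must check carefully.

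Finally I would read off the cohomology. The Koszul complex of a nonzero vector $\omega$ is exact, so $\mathrm{rank}\bigl(d\colon\bigwedge^{k}\to\bigwedge^{k+1}\bigr)=\binom{N-1}{k}$. Truncating at grade $T$ leaves the lower grades exact (so $\coh^f_Q=0$ for $f<T$, while the higher grades are empty), concentrating everything at grade $T$, where
\[
\dim\coh^T_Q=\binom{N}{T}-\mathrm{rank}\bigl(d\colon\textstyle\bigwedge^{T-1}\to\bigwedge^{T}\bigr)=\binom{N}{T}-\binom{N-1}{T-1}=\binom{N-1}{T}.
\]
For $N\le\ell+1$ this gives $\binom{N-1}{N-1}=1$ at grade $N-1$, with representatives the single-hole states ($N-1$ particles); for $N=\ell+2$ it gives $\binom{N-1}{N-2}=N-1=\ell+1$ at grade $N-2=\ell$, with representatives the two-hole states. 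This is exactly the assertion. Apart from the convention excluding the fully wrapped state and the boundary-grade bookkeeping of the third paragraph, the argument is just the standard exactness of the Koszul complex.
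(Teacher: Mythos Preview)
Your argument is correct and takes a genuinely different route from the paper. The paper proves this proposition with the same tic-tac-toe machinery it uses throughout: it takes $S_1$ to be a single site and $S_2$ the remaining $N-1$ sites, solves $\coh_{Q_1}$ via the single-site result, and observes that the second step is trivial because $\coh_{Q_1}$ is concentrated in a single grade. For $1\le N\le\ell+1$ this yields the unique representative with the $S_1$ site empty and all of $S_2$ occupied; for $N=\ell+2$ it yields the $\ell+1$ two-hole states with the $S_1$ site empty and $S_2$ carrying an $a$-cluster and a $b$-cluster with $a+b=\ell$. Your approach instead recognizes the admissible Hilbert space as the truncation $\bigwedge^{\le T}\mathbb{C}^N$ of the exterior algebra, verifies that on this segment $Q$ agrees with the Koszul differential $\omega\wedge\cdot$ below the top grade and vanishes at the top, and then reads off the dimension from Koszul exactness via a single Pascal-rule identity. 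This is cleaner and more conceptual for these short chains, and explains in one stroke why the answer is a binomial coefficient; the trade-off, which you rightly flag, is that it relies on the exclusion constraint biting only at the very top grade, so it does not extend to $N>\ell+2$, where the paper's spectral-sequence method remains the workhorse. One incidental remark: your computation places the $N=\ell+2$ cohomology at grade $N-2=\ell$, which matches the paper's own proof and its later use in \cref{prop:closedchain}; the phrase ``$N-1$ particles'' in the proposition's last sentence should be read as referring literally only to the range $1\le N\le\ell+1$.
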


\begin{proposition}\label{lem:periodicshort2}
The cohomology, $\coh_Q (V^{(p)}_N)$, of a chain of length $\ell+3\leq N\leq 2\ell+2$ with periodic boundary conditions has dimension one. The non-trivial elements contain $N-3$ particles.
\end{proposition}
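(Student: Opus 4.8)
The plan is to prove \cref{lem:periodicshort2} by a single application of the tic-tac-toe lemma (\cref{prop:ttt}), taking for $S_2$ one distinguished site $t$ of the cycle and for $S_1$ the remaining $N-1$ sites. Since $Q_2 = Q|_{\{t\}}$ only inserts a fermion at $t$, while $Q_1 = Q|_{S_1}$ cannot change the occupation of $t$, the first-step cohomology splits as $\coh_{Q_1} = \coh_{Q_1}(V^{t=0}) \oplus \coh_{Q_1}(V^{t=1})$ according to whether $t$ is empty or occupied, and $Q_2$ maps the first summand into the second. Equivalently, one has the short exact sequence of complexes $0 \to V^{t=1} \to V^{(p)}_N \to V^{t=0} \to 0$ and its long exact sequence in cohomology. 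The strategy is to evaluate both sectors using the open-chain results \cref{lem:special,lem:specialshort2}, and then to assemble them.

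First I would treat the sector $V^{t=0}$. When $t$ is empty the cycle is cut open there, so $(V^{t=0},Q_1)$ is precisely the complex of an open chain on the remaining $N-1$ sites with free boundary conditions $s=(\ell,\ell)$, both ends being adjacent to the empty site $t$. Evaluating this with \cref{lem:special} (for $N-1=\ell+2$) and \cref{lem:specialshort2} (for $\ell+3\le N-1\le 2\ell+1$) shows that $\coh_{Q_1}(V^{t=0})$ is one-dimensional and concentrated at fermion number $N-3$ precisely when $N=\ell+3$, and vanishes for all $\ell+4\le N\le 2\ell+2$.

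The crux is the complementary sector $V^{t=1}$, and this is where I expect the main obstacle. When $t$ is occupied the two ends of the cut-open chain are no longer free: the occupied runs abutting $t$ on its left and right, of lengths $L$ and $R$, must satisfy $L+R\le \ell-1$ so that the cluster through $t$ does not exceed length $\ell$. This coupling of the two boundaries means that $(V^{t=1},Q_1)$ is \emph{not} the complex of a single special-boundary open chain of the type handled by \cref{lem:special,lem:specialshort2}. I would resolve it by a second conditioning, on the occupancy of a neighbor of $t$ (or, equivalently, by peeling off the whole cluster that contains $t$ together with the two holes bounding it): forcing the neighbor empty sets one of $L,R$ to zero and leaves an open chain carrying only an ordinary single-end special boundary condition, while the complementary case removes a boundary site and shortens the chain. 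Iterating this reduction a bounded number of times expresses $\coh_{Q_1}(V^{t=1})$ entirely in terms of the open-chain cohomologies already known, and I expect to find that it vanishes for $N=\ell+3$ and is one-dimensional, concentrated at fermion number $N-3$, for $\ell+4\le N\le 2\ell+2$. The technical heart of the proof is exactly this bookkeeping of the induced boundary conditions.

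Combining the two sectors, exactly one of them is nonzero for each $N$ in the range $\ell+3\le N\le 2\ell+2$, and in both cases it contributes a single class at fermion number $N-3$. Because one of the two summands of $\coh_{Q_1}$ always vanishes, the map $Q_2$ between them is trivially zero, so $\coh_{Q_2}(\coh_{Q_1})$ equals the surviving summand and is concentrated in a single row; \cref{prop:ttt} then identifies it with $\coh_Q(V^{(p)}_N)$. This yields $\dim \coh_Q(V^{(p)}_N)=1$ with the nontrivial class sitting at grade $f=N-3$, so that its representatives carry $N-3$ particles, as claimed. As a final consistency check one notes that such a configuration has three holes whose occupied arcs each have length at most $\ell$, which is possible since $N-3\le 2\ell-1$; one may also verify that the dichotomy between the two sectors reproduces \cref{lem:periodicshort1} at the lower boundary $N=\ell+2$ and the known $\ell=1,2$ results.
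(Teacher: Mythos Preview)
Your approach is genuinely different from the paper's: you take $S_2$ to be a single site $t$ and $S_1$ the remaining $N-1$ sites, whereas the paper does almost the opposite, choosing $S_2$ to be $\ell$ consecutive sites and $S_1$ the remaining $N-\ell$ sites (with $3\le N-\ell\le \ell+2$). In the paper the first step $\coh_{Q_1}$ is therefore immediately an instance of \cref{lem:special}, and all the work goes into the second step, where a case analysis (their cases a/b and b1/b2) is needed to compute $\coh_{Q_2}(\coh_{Q_1})$. Your split moves the difficulty to the first step, into the sector $V^{t=1}$.

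Your analysis of $V^{t=0}$ is correct. The gap is that for $V^{t=1}$ you only state what you ``expect to find'' and do not carry out the computation; this is precisely the content of the proposition. Your proposed resolution is workable, but note that a single conditioning on a neighbour $u$ does \emph{not} decouple the two ends: when $u=0$ you indeed get an ordinary special chain, but when $u=1$ the coupled constraint becomes $L+R'\le \ell-2$ and persists. What actually works is the filtration of $V^{t=1}$ by the length $j$ of the run of $1$'s immediately following $t$; the associated graded piece at level $j$ (sites $1,\dots,j$ occupied, site $j{+}1$ empty) is an open chain of $N-j-2$ sites with $s=(\ell,\ell{-}1{-}j)$. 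Running through \cref{lem:special,lem:specialshort2} for $j=0,\dots,\ell-1$ one finds that \emph{exactly one} graded piece has nonzero cohomology, namely the one with $N-j-2=\ell+2$ (i.e.\ $j=N-\ell-4$, present only for $N\ge \ell+4$), and it is one-dimensional at the expected fermion number. Hence the spectral sequence degenerates at $E_1$ and your expectation holds. This check of all $\ell$ pieces is the substance of the argument and should be written out; once it is, your route and the paper's are of comparable length, with your version arguably more systematic and the paper's more hands-on.
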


\subsection{Cut (and paste)}
\label{sec:iso}
In this section we present the proof of the \textit{cut (and paste)} cohomology isomorphism of \cref{thm:cutandpaste}.
It is the result of a number of steps whose basic idea is to divide the chain into two parts, and relate the cohomology of the full chain to those of the subchains by using the tic-tac-toe lemma. In the course of this procedure, we prove the existence of a recursive structure of the cohomology groups.

\paragraph{Notation.} 
Let us first introduce some notation. In this sections we fix integers $N,\tilde N$ such that
\begin{equation*}
  N > 2(\ell+1), \quad \text{and}\quad \tilde N = N-(\ell+2) > \ell.
\end{equation*}
Furthermore, we abbreviate the supercharges acting on a chain of length $N$ (resp. $\tilde N$) with periodic or special boundary conditions by $Q$ (resp. $\tilde Q$). Our aim is to apply the tic-tac-toe lemma, for which we divide the chain into a sublattice $S_1$ consisting of the sites $1,2,\dots,\ell+2$, and $S_2$ the remaining sites :
\begin{center}
 \begin{tikzpicture} 
    
     \draw[thick] (0,0) rectangle (1.8,0.3);
     \draw[thick,xshift=1.8cm, densely dotted] (0,0) rectangle (4,0.3);
         
     \draw (.9,.6) node{$\overset{\ell+2\,\text{sites}}{\overbrace{\hspace{1.7cm}}}$};
     \draw (.9,-.3) node{$S_1$};  
     \draw (3.8,.6) node{$\overset{\tilde N\,\text{sites}}{\overbrace{\hspace{3.9cm}}}$};
      \draw (3.8,-.3) node{$S_2$};
  \end{tikzpicture}
\end{center}
We may restrict the supercharge $Q$ to each of these sublattices, and find 
\begin{equation*}
  Q = Q_1+Q_2,\quad Q_i \equiv Q|_{S_i},\,,i=1,2.
\end{equation*}
Here and in the following, it is understood that the restricted supercharges respect the boundary conditions $(p)$ and $(s)$ for the chain of length $N$. One checks then that for both type of boundary conditions, the operators $Q_1$ and $Q_2$ are nilpotent and anticommute. This allows to consistently define the two-step cohomology $\coh_{21} \equiv \coh_{Q_2}(\coh_{Q_1}(V_N^{(p/s)}))$. In the course of its analysis, we will frequently encounter the state
\begin{equation*}
\ket{\chi} \equiv |0\underset{\ell \text{-cluster}}{\underbrace{1\dots1}}0\rangle. 
\end{equation*}

\paragraph{Structure of the cohomologies.}

Our first goal is to characterize the cohomology of the supercharge $Q_1$ :
\begin{proposition}
  Let $N > 2(\ell+1)$ then any non-trivial element of $\coh_{Q_1}$ with grade  $f$ has $f\geq \ell $, and can be represented by a state $|\psi\rangle \in  V_{N,f}$ of the form
  \begin{equation*}
    |\psi\rangle = |\chi\rangle \otimes |\psi'\rangle \quad \text{for some non-zero}\quad |\psi'\rangle \in V_{\tilde N,\tilde f},
  \end{equation*}
  with $\tilde N=N-\ell-2,\tilde f=f-\ell$.
  \begin{proof}
    Let us start by considering $\coh_{Q_1}$ for special boundary conditions $s=(c_1,c_N)$. Notice that since $Q_1$ acts only on the first $\ell+2$ sites we may solve the cohomology problem independently in any subspace of $V^{(s)}_{N,f}$ which contains all states with a $k$-cluster starting at site $\ell+3$ :
\begin{center}
 \begin{tikzpicture} 
     \begin{scope}
     \draw[thick,xshift=1.8cm] (0,0) rectangle (1.8,0.3);
    
     \draw[dotted] (3.6,0.3)--(5.15,.3);
     \draw[dotted] (3.6,0.)--(5.15,0.); 
     
     \draw(4.1,-.3) node{$\underset{k\text{-cluster}}{\underbrace{\hspace{0.75cm}}}$};
     \draw (2.7,.6) node{$\overset{\text{sites}\, 1,2,\dots,\ell+2}{\overbrace{\hspace{1.7cm}}}$};

     \draw (4.4,0.15) node {\small $1\cdots 10\cdots $};
     
     \draw (8,0.15) node {$\text{with} \quad k=0,\dots, \min(f,\ell)$}; 
     \end{scope}
     
  \end{tikzpicture}
\end{center}
Within each such subspace, finding $\coh_{Q_1}$ amounts to solving the cohomology problem for an open chain of length $\ell+2$ with boundary conditions $(c_1,c_{\ell+2}=\ell-k)$. Notice however, that this is true regardless of the boundary condition for the full chain if only if $N> 2(\ell+1)$. Recall from \cref{lem:special} that for chains of length $\ell+2$ the cohomology with special boundary conditions is non-trivial if and only if
\begin{equation}
  c_1 + c_{\ell+2} \geq \ell 
  \label{eqn:boundk}
\end{equation}
If this restriction is met, then the non-trivial elements have grade $\ell$ and can be represented up to a factor by the state $\ket{\chi}$.
For the full chain, we conclude that the cohomology problem has therefore a non-trivial solution if and only if
\begin{equation*}
  k\leq c_1 \quad \text{and} \quad f\geq \ell.
\end{equation*}
The bound on $k$ implies that the non-trivial elements of $\coh_{Q_1}$ at grade $f$ can be represented by states of the form $|\psi\rangle = |\chi\rangle \otimes |\psi'\rangle$ where $|\psi'\rangle$ is a non-zero state on $S_2$, subject to the same boundary conditions $(c_1,c_N)$ as on the full chain, and hence $|\psi'\rangle \in V^{(s)}_{\tilde N,\tilde f}$.

For periodic boundary conditions the reasoning is similar. First, notice that we may solve the cohomology problem for $Q_1$ independently in the subspaces of $V^{(p)}_{N,f}$ containing all states with a $k$-cluster starting at site $\ell+3$, and $m$-cluster ending at site $N$ :
  \begin{center}
 \begin{tikzpicture} 
     \begin{scope}
     \draw[thick,xshift=1.8cm] (0,0) rectangle (1.8,0.3);
    
     \draw[dotted] (0.3,0.3)--(1.8,.3);
     \draw[dotted] (0.3,0.)--(1.8,0.); 
    
     \draw[dotted] (3.6,0.3)--(5.15,.3);
     \draw[dotted] (3.6,0.)--(5.15,0.); 
     
     \draw (1.35,-.3) node{$\underset{m\text{-cluster}}{\underbrace{\hspace{.75cm}}}$};
     \draw(4.1,-.3) node{$\underset{k\text{-cluster}}{\underbrace{\hspace{0.75cm}}}$};
     \draw (2.7,.6) node{$\overset{\text{sites}\, 1,2,\dots,\ell+2}{\overbrace{\hspace{1.7cm}}}$};

     \draw (1.05,0.15) node {\small $\cdots 01\cdots 1$};
     \draw (4.4,0.15) node {\small $1\cdots 10\cdots $};
     \end{scope}
     
  \end{tikzpicture}
\end{center}
Given $k,m$ such that $k+m\leq f$ we need to solve the cohomology problem for an open chain with boundary conditions $(c_1=\ell-m,c_{\ell+2}=\ell-k)$. From \eqref{eqn:boundk} and the grade of the representative for the short chain we conclude thus
\begin{equation*}
  m+k\leq \ell \quad \text{and} \quad f\geq \ell.
\end{equation*}
  The first inequality leads to the conclusion that each non-zero element of $\coh_{Q_1}$ can thus be represented by $|\psi\rangle = |\chi\rangle \otimes |\psi'\rangle$ where $|\psi'\rangle \in V_{\tilde N,\tilde f}^{(p)}$ is subject to periodic boundary conditions.
  \end{proof}
  \label{prop:structureHQ1}
\end{proposition}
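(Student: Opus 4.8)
The plan is to exploit the locality of $Q_1$: since $Q_1$ acts only on the sites $S_1=\{1,\dots,\ell+2\}$, it commutes with any operator reading off the occupation pattern of $S_2$, and in particular it never alters the clusters that touch the interface. First I would decompose the complex $(V_N,Q_1)$ as a direct sum of subcomplexes indexed by the length $k$ of the cluster beginning at site $\ell+3$ and, for periodic boundary conditions, additionally by the length $m$ of the cluster ending at site $N$. Because $Q_1$ leaves these interface clusters untouched, this is a genuine splitting, so $\coh_{Q_1}$ is computed sector by sector as a direct sum.

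Within a fixed sector the problem collapses to the cohomology of the length-$(\ell+2)$ open subchain $S_1$, but with effective boundary conditions forced by the neighbouring clusters. A $k$-cluster starting at site $\ell+3$ allows the cluster reaching site $\ell+2$ from inside $S_1$ to have length at most $\ell-k$, so the effective right datum is $c_{\ell+2}=\ell-k$; the left datum $c_1$ is inherited for special boundary conditions, while for periodic boundary conditions the wrap-around cluster of length $m$ sets $c_1=\ell-m$. I would then invoke \cref{lem:special} at $N=\ell+2$: the subchain cohomology is one-dimensional exactly when $c_1+c_{\ell+2}\geq\ell$, the unique class sitting at grade $\ell$ with representative $\ket{01\dots10}=\ket{\chi}$, and vanishes otherwise.

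Substituting the effective data turns the nontriviality criterion into $k\leq c_1$ (special) or $m+k\leq\ell$ (periodic). Since the surviving class fills $S_1$ as the $\ell$-cluster $\ket{\chi}$ and carries grade $\ell$, every nontrivial element of $\coh_{Q_1}$ has $f\geq\ell$ and factorizes as $\ket{\chi}\otimes\ket{\psi'}$ with $\ket{\psi'}$ carrying $\tilde f=f-\ell$ fermions on $S_2$. The final step is to read the residual constraints as the defining conditions of the $\tilde N$-chain: $k\leq c_1$ says the cluster at the new left end of $S_2$ obeys the boundary condition $c_1$, so $\ket{\psi'}\in V^{(s)}_{\tilde N,\tilde f}$ with data $(c_1,c_N)$; whereas $m+k\leq\ell$ says the two interface clusters, which merge across the periodic seam of the shortened chain, respect the exclusion rule, so $\ket{\psi'}\in V^{(p)}_{\tilde N,\tilde f}$.

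The main obstacle is justifying that this sector reduction is exact and uniform, which is precisely where the hypothesis $N>2(\ell+1)$, equivalently $\tilde N>\ell$, enters. One must check that the two interface clusters never overlap and that the far boundary condition of the full chain cannot feed back into the effective boundary data on $S_1$; otherwise the identification of each sector with the length-$(\ell+2)$ problem of \cref{lem:special} would fail. If $S_2$ were shorter than $\ell$, the $k$-cluster and the $m$-cluster could interact through the periodic wrap-around or through the far special boundary, and the clean reduction would break down. I expect the bookkeeping required to rule out these overlaps, rather than any deep cohomological input, to be the delicate part of the argument.
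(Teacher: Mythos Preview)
Your proposal is correct and follows essentially the same approach as the paper's proof: decompose $(V_N,Q_1)$ into sectors indexed by the interface cluster lengths $k$ (and $m$ in the periodic case), reduce each sector to the length-$(\ell+2)$ subchain with effective boundary data $(c_1,\ell-k)$ or $(\ell-m,\ell-k)$, apply \cref{lem:special} to obtain the representative $\ket{\chi}$ at grade $\ell$, and read the resulting constraints $k\leq c_1$ or $m+k\leq\ell$ as the defining conditions of $V^{(s)}_{\tilde N}$ or $V^{(p)}_{\tilde N}$. You also correctly identify that the hypothesis $N>2(\ell+1)$ is precisely what prevents the far boundary condition or a periodic wrap-around from interfering with the effective boundary data on $S_1$.
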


This proposition characterizes the representatives of all non-trivial elements in $\coh_{Q_1}$ (the statement can obviously be extended to representatives of the zero element if $|\psi'\rangle=0$ is chosen). Thus all states in the kernel of $Q_1$ which are not of the form as stated above are thus coboundaries. We use this observation to deduce two useful results.
\begin{corollary}
  Let $|\psi'\rangle \in V^{(p/s)}_{\tilde N}$ and $|\phi\rangle \in  V^{(p/s)}_{N}$ then the equation
  \begin{equation*}
    |\chi\rangle \otimes |\psi'\rangle = Q_1|\phi\rangle
  \end{equation*}
  has no solutions other than $|\psi'\rangle = 0$ and $Q_1|\phi\rangle = 0$.
  \label{corr:useful1}
\end{corollary}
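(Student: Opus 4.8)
The plan is to derive the corollary from \cref{prop:structureHQ1} by upgrading it from a statement about \emph{representatives} of classes to a statement about an \emph{isomorphism}. I first observe that $\ket{\chi}\otimes\ket{\psi'}$ is automatically a $Q_1$-cocycle: the operator $Q_1$ acts only on the sites of $S_1$, and $\ket{\chi}=|01\cdots10\rangle$ represents the non-trivial class of the length-$(\ell+2)$ chain (\cref{lem:special}, case (ii) with $N=\ell+2$), so $Q_1\ket{\chi}=0$ and hence $Q_1(\ket{\chi}\otimes\ket{\psi'})=(Q_1\ket{\chi})\otimes\ket{\psi'}=0$ (no fermionic sign arises, since $Q_1$ inserts into $S_1$, which precedes $S_2$). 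Consequently the hypothesis $\ket{\chi}\otimes\ket{\psi'}=Q_1\ket{\phi}$ says precisely that this cocycle is a coboundary, i.e. that its class $[\ket{\chi}\otimes\ket{\psi'}]_{\coh_{Q_1}}$ vanishes. The entire content of the corollary thus reduces to the implication $[\ket{\chi}\otimes\ket{\psi'}]_{\coh_{Q_1}}=0 \Rightarrow \ket{\psi'}=0$, after which $Q_1\ket{\phi}=\ket{\chi}\otimes 0=0$ follows trivially.

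The step I expect to be the main obstacle is to prove that the assignment $\Phi:\ket{\psi'}\mapsto[\ket{\chi}\otimes\ket{\psi'}]_{\coh_{Q_1}}$ is injective. \cref{prop:structureHQ1} already supplies the surjectivity of $\Phi$ (every non-trivial class admits such a representative), but injectivity requires ruling out that some non-zero $\ket{\psi'}$ produces a coboundary. The mechanism I would use is that $Q_1$ changes only the occupation of $S_1$ and leaves the configuration on $S_2$ untouched, so it preserves the decomposition $V^{(p/s)}_N=\bigoplus_\beta W_\beta$ into blocks $W_\beta$ indexed by the occupation configuration $\beta$ of $S_2$; accordingly $\coh_{Q_1}=\bigoplus_\beta \coh_{Q_1}(W_\beta)$. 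Expanding $\ket{\psi'}=\sum_\beta \psi'_\beta\ket{\beta}$ in the occupation basis of $V^{(p/s)}_{\tilde N}$, the state $\ket{\chi}\otimes\ket{\beta}$ lies in $W_\beta$, so the classes $[\ket{\chi}\otimes\ket{\beta}]$ sit in distinct summands.

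It then remains to check that $[\ket{\chi}\otimes\ket{\beta}]\neq 0$ for every basis configuration $\beta$ occurring in $V^{(p/s)}_{\tilde N}$. Within $W_\beta$ the $Q_1$-cohomology is exactly the length-$(\ell+2)$ cohomology of \cref{lem:special}, with the right boundary datum fixed by the cluster length $k$ at the left end of $S_2$ (and, for periodic conditions, the left datum fixed by the cluster length $m$ at the right end of $S_2$). The key point---already established inside the proof of \cref{prop:structureHQ1}---is that the non-triviality condition \eqref{eqn:boundk}, which there reads $k\leq c_1$ for special and $m+k\leq\ell$ for periodic conditions, is satisfied automatically by every $\beta$ that arises in $V^{(p/s)}_{\tilde N}$, since the boundary conditions inherited by $S_2$ (and, in the periodic case, the cyclic exclusion rule after pasting) encode exactly these inequalities. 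Hence each $[\ket{\chi}\otimes\ket{\beta}]$ is the non-zero generator of its block, the family $\{[\ket{\chi}\otimes\ket{\beta}]\}_\beta$ is linearly independent, and $\sum_\beta\psi'_\beta[\ket{\chi}\otimes\ket{\beta}]=0$ forces $\psi'_\beta=0$ for all $\beta$, i.e. $\ket{\psi'}=0$. Since $\Phi$ respects the fermion-number grading, this injectivity holds grade by grade and hence for arbitrary $\ket{\psi'}\in V^{(p/s)}_{\tilde N}$, which completes the argument.
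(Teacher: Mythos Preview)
Your argument is correct and follows exactly the route that the paper leaves implicit: the paper states the corollary without proof, regarding it as immediate from \cref{prop:structureHQ1}, and your write-up unpacks precisely the block decomposition $V_N^{(p/s)}=\bigoplus_\beta W_\beta$ by $S_2$-configuration that already appears inside the proof of that proposition. You are right to observe that the \emph{statement} of \cref{prop:structureHQ1} only gives surjectivity of $\Phi$, while the corollary needs injectivity; the missing ingredient is exactly what the \emph{proof} of \cref{prop:structureHQ1} establishes, namely that for every $\beta\in V_{\tilde N}^{(p/s)}$ the effective boundary conditions on $S_1$ satisfy \eqref{eqn:boundk}, so each $[\ket{\chi}\otimes\ket{\beta}]$ is the non-zero generator of $\coh_{Q_1}(W_\beta)$.
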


\begin{corollary}
  Consider the chain of length $N$ with boundary conditions $(p)$ or $(s)$ with $s=(c_1,c_N)$. Let $|\psi'\rangle$ be a state of the Hilbert space $V_{\tilde N}$ with free conditions, whose first/last site is part of a $k$-cluster/m-cluster respectively, where $k+m > \ell$ for the choice $(p)$, and $k>c_1,m=c_N$ for choice $(s)$. Then there is a state $|\phi\rangle \in  V^{(p/s)}_{N}$ such that
  \begin{equation*}
    |\chi\rangle \otimes |\psi'\rangle = Q_1|\phi\rangle.
  \end{equation*}
  \label{corr:useful2}
\end{corollary}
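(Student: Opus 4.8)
The plan is to show that $\ket{\chi}\otimes\ket{\psi'}$ is a cocycle of $Q_1$ which happens to lie in a subspace on which the cohomology of $Q_1$ vanishes; any such cocycle is automatically a coboundary, which is precisely the assertion. I would organise the argument around the same subspace decomposition of $V_N^{(p/s)}$ used in the proof of \cref{prop:structureHQ1}: one fixes the $k$-cluster starting at site $\ell+3$ (and, in the periodic case, the $m$-cluster ending at site $N$) and solves the $Q_1$-cohomology problem separately in each such subspace. Because $N>2(\ell+1)$, sites $\ell+2$ and $1$ are the only sites of $S_1$ that can interact with these fixed clusters, so within the chosen subspace the $S_1$-problem is exactly that of an open chain of length $\ell+2$.

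First I would verify the cocycle property $Q_1(\ket{\chi}\otimes\ket{\psi'})=0$. Since $Q_1$ acts only on $S_1$ and $\ket{\chi}=\ket{0\,1\cdots 1\,0}$ has occupied sites $2,\dots,\ell+1$, the only empty sites of $S_1$ are $1$ and $\ell+2$. Inserting a fermion at either one extends the central $\ell$-cluster to length $\ell+1$, which violates the exclusion rule and is therefore forbidden; hence both contributions vanish and $\ket{\chi}\otimes\ket{\psi'}$ is $Q_1$-closed, irrespective of the configuration $\ket{\psi'}$ on $S_2$.

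Next I would pin down when the relevant subspace has vanishing $Q_1$-cohomology. As recalled in the proof of \cref{prop:structureHQ1}, fixing the $k$-cluster at site $\ell+3$ turns the $S_1$-problem into the cohomology of an open chain of length $\ell+2$ with boundary conditions $(c_1,\ell-k)$ for choice $(s)$, and additionally fixing the $m$-cluster at site $N$ gives boundary conditions $(\ell-m,\ell-k)$ for choice $(p)$. By \cref{lem:special} this length-$(\ell+2)$ cohomology is non-trivial precisely when $c_1+(\ell-k)\geq\ell$, i.e. $k\leq c_1$, in the special case, and when $(\ell-m)+(\ell-k)\geq\ell$, i.e. $k+m\leq\ell$, in the periodic case. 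The hypotheses $k>c_1$ (special) and $k+m>\ell$ (periodic) are exactly the complementary inequalities, so in both cases the cohomology of $Q_1$ on the chosen subspace is zero.

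Finally, combining the two observations: $\ket{\chi}\otimes\ket{\psi'}$ is a $Q_1$-cocycle lying in a subspace whose $Q_1$-cohomology vanishes, hence it is a $Q_1$-coboundary, $\ket{\chi}\otimes\ket{\psi'}=Q_1\ket{\phi}$. Concretely, writing $\ket{\chi}=Q_1\ket{\eta}$ in the length-$(\ell+2)$ subproblem yields $\ket{\phi}=\pm\,\ket{\eta}\otimes\ket{\psi'}$. The step I expect to require the most care is the bookkeeping of boundary conditions: one must check that $\ket{\chi}\otimes\ket{\psi'}$ and the resulting $\ket{\phi}$ genuinely lie in $V_N^{(p/s)}$. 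This is where the extra hypotheses enter — the requirement $m=c_N$ (special), respectively the prescribed $m$-cluster at site $N$ (periodic), guarantees that the right end of $\ket{\psi'}$ respects the full-chain boundary condition, while the data $(c_1,\ell-k)$ or $(\ell-m,\ell-k)$ imposed on $S_1$ ensures that $\ket{\eta}$, and hence $\ket{\phi}$, is a legitimate state of the full chain. The condition $N>2(\ell+1)$ is essential throughout, since it decouples $S_1$ from the far end of the chain and makes the reduction to a length-$(\ell+2)$ problem exact.
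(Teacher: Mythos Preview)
Your proposal is correct and follows precisely the reasoning the paper intends: the corollary is stated without proof, immediately after the remark that any $Q_1$-cocycle not of the form in \cref{prop:structureHQ1} is a coboundary, and your argument spells out exactly that implication via the subspace decomposition and \cref{lem:special}. The only addition worth noting is that your discussion of the bookkeeping (why $\ket{\chi}\otimes\ket{\psi'}$ and the resulting $\ket{\phi}$ genuinely live in $V_N^{(p/s)}$) makes explicit a point the paper glosses over.
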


\cref{prop:structureHQ1} suggests the introduction of the mapping
\begin{equation}
   G: V_{\tilde N}^{(p/s)} \to V_N^{(p/s)},\quad |\psi'\rangle \mapsto |\psi\rangle = |\chi\rangle \otimes |\psi'\rangle.
  \label{eqn:defg}
\end{equation}
which will play a central role in the following. It is quite obvious that $G$ maps any state $|\psi'\rangle$ into the kernel of $Q_1$ : $Q_1\left(G|\psi\rangle\right)=0$. In view of our aim to establish a relation between the cohomology of $\tilde Q$ and the two-step cohomology of the tic-tac-toe lemma, we also need the action of $Q_2$ on $G|\psi'\rangle$ which leads to the following interesting commutation relation :
\begin{lemma}
  Let $|\psi'\rangle\in V_{\tilde N}^{(p/s)}$ then there is a state $|\phi'\rangle \in V_N^{(p/s)}$ such that
  \begin{equation*}
    Q_2\left(G|\psi'\rangle\right) = (-1)^\ell \left(G\left(\tilde Q|\psi'\rangle\right)+Q_1|\phi'\rangle\right).
  \end{equation*}
  \begin{proof}
    Let us compute
    \begin{align*}
      Q_2\left(G|\psi'\rangle\right) & = Q_2\left(|\chi\rangle \otimes |\psi'\rangle\right) = (-1)^{\ell}|\chi\rangle \otimes Q_2|\psi'\rangle\\
       & = (-1)^{\ell}\left(|\chi\rangle \otimes P Q_2|\psi'\rangle+|\chi\rangle \otimes (1-P) Q_2|\psi'\rangle
\right).
    \end{align*}
    Here $P$ denotes the projector on $V_{\tilde N}^{(p/s)}$. We know that $\tilde Q = PQ_2$. Furthermore, $(1-P)Q_2|\psi'\rangle \notin V_{\tilde N}^{(p/s)}$ by construction, but fulfils the requirements of \cref{corr:useful2}. We infer thus the existence of a state $|\phi'\rangle \in V_N^{(p/s)}$ such that $|\chi\rangle \otimes (1-P) Q_2|\psi'\rangle = Q_1|\phi'\rangle$. Replacing this into the expression here above proves the claim.
   \end{proof}
   \label{lem:Q2onG}
\end{lemma}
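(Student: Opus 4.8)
The plan is to evaluate $Q_2(G|\psi'\rangle)$ directly and then split the outcome into the two contributions appearing on the right-hand side. Writing $G|\psi'\rangle = |\chi\rangle \otimes |\psi'\rangle$, the first step is to argue that $Q_2$ passes through the $|\chi\rangle$ factor up to a fermionic sign, namely $Q_2(|\chi\rangle \otimes |\psi'\rangle) = (-1)^{\ell}\, |\chi\rangle \otimes Q_2|\psi'\rangle$. Each term of $Q_2$ inserts a fermion on a site of $S_2$, and anticommuting the corresponding creation operator past the $\ell$ occupied sites of $|\chi\rangle$ produces the common factor $(-1)^{\ell}$. What makes the two factors decouple so cleanly is that the sites bounding the cut are empty in $|\chi\rangle$: site $\ell+2$ is empty, so no cluster straddles the boundary between $S_1$ and $S_2$, and in the periodic case site $1$ is empty as well, so the right end of $S_2$ does not reconnect to $S_1$ through the boundary. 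Consequently the cluster data that $Q_2$ reads off on $S_2$ is exactly that of $|\psi'\rangle$ on the standalone subchain.

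The second step is to insert the projector $P$ onto $V_{\tilde N}^{(p/s)}$ and write $Q_2|\psi'\rangle = PQ_2|\psi'\rangle + (1-P)Q_2|\psi'\rangle$. Here I would use the identification $\tilde Q = PQ_2$: among the states that $Q_2$ generates on $S_2$, those that still obey the boundary conditions of the $\tilde N$-chain are precisely the ones produced by its own supercharge $\tilde Q$. This turns the first piece into $|\chi\rangle \otimes \tilde Q|\psi'\rangle = G(\tilde Q|\psi'\rangle)$, which is the leading term of the claim.

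The remaining and, I expect, most delicate step is to show that the overflow term $|\chi\rangle \otimes (1-P)Q_2|\psi'\rangle$ is a $Q_1$-coboundary, for which I would appeal to \cref{corr:useful2}. The task is to verify that $(1-P)Q_2|\psi'\rangle$ consists exactly of the states violating the $\tilde N$-chain boundary conditions in the single way the full chain tolerates. In the special case the full chain imposes only the free (length-$\ell$) constraint at the left end of $S_2$ while still enforcing $c_N$ at the right, so the discarded states carry a left cluster of length $k > c_1$ with the right-end data unchanged; in the periodic case sites $\ell+3$ and $N$ are neighbours on the $\tilde N$-chain, so the discarded states are exactly those whose boundary clusters would merge to length $k+m > \ell$. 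These descriptions are precisely the hypotheses of \cref{corr:useful2}, which then furnishes a state $|\phi'\rangle \in V_N^{(p/s)}$ with $|\chi\rangle \otimes (1-P)Q_2|\psi'\rangle = Q_1|\phi'\rangle$. Reassembling the two pieces and extracting the common sign gives $Q_2(G|\psi'\rangle) = (-1)^{\ell}\bigl(G(\tilde Q|\psi'\rangle) + Q_1|\phi'\rangle\bigr)$, as claimed. The main obstacle is the bookkeeping that confirms, term by term, that the only boundary violation present in $(1-P)Q_2|\psi'\rangle$ occurs at the cut — that is, that $Q_2$ never spoils the right-end cluster — so that \cref{corr:useful2} applies verbatim.
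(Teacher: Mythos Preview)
Your proposal is correct and follows the paper's proof essentially step for step: pass $Q_2$ through $|\chi\rangle$ to pick up $(-1)^\ell$, insert the projector $P$ onto $V_{\tilde N}^{(p/s)}$ with the identification $\tilde Q = PQ_2$, and then invoke \cref{corr:useful2} to recognise $|\chi\rangle\otimes(1-P)Q_2|\psi'\rangle$ as a $Q_1$-coboundary. Your additional remarks on why the cut decouples cleanly and why the overflow term meets the hypotheses of \cref{corr:useful2} are accurate elaborations of points the paper leaves implicit.
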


\bigskip

Our next step consists of understanding the structure of the non-trivial elements of the two-step cohomology group $\mathcal H_{21}$ with the help of the results about $\coh_{Q_1}$.

\begin{proposition}
  Let $N > 2(\ell+1)$ then any
  element of $\coh_{21}$ and definite grade $f$ has $f\geq \ell $, and can be represented by a state $|\psi\rangle \in  V^{(p/s)}_{N,f}$ of the form
  \begin{equation*}
    |\psi\rangle = G|\psi'\rangle, \quad \tilde Q|\psi'\rangle = 0,
  \end{equation*}
  for some non-zero $|\psi'\rangle \in V^{(p/s)}_{\tilde N,\tilde f}$ with $\tilde N=N-\ell-2,\,\tilde f=f-\ell$.  \begin{proof}
    Any representative $|\psi\rangle$ of an element in $\coh_{21}$ needs to solve the two equations \cref{eqn:CondRepH21}. First, according to \cref{eqn:CondRepH21First}, it needs to be in the kernel of $Q_1$. We know from \cref{prop:structureHQ1} that the non-trivial elements of the latter are (up to a $Q_1$-coboundary which can be chosen zero without loss of generality) given by $|\psi\rangle = G|\psi'\rangle$ with a non-zero state $|\psi'\rangle \in V^{(p/s)}_{\tilde N}$ (for representatives of the trivial element, it is sufficient to choose $|\psi'\rangle =0$). Second, \cref{eqn:CondRepH21Second} implies that $Q_2(G|\psi'\rangle) = Q_1|\phi\rangle$ for some $|\phi\rangle$. We use \cref{lem:Q2onG} to infer the existence of a state $|\phi'\rangle$ such that
  \begin{equation*}
    (-1)^{\ell}\left(|\chi\rangle \otimes \tilde Q|\psi'\rangle+Q_1|\phi'\rangle\right) = Q_1 |\phi\rangle 
  \end{equation*}
  for some $|\phi\rangle \in V_N^{(p/s)}$. Finally, by virtue of \cref{corr:useful1} this equation holds if and only if $\tilde Q|\psi'\rangle=0$, what finishes the proof.
  \end{proof} 
  \label{prop:structureRepH21}
\end{proposition}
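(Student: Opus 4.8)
The plan is to feed the two defining properties of a $\coh_{21}$-representative, \eqref{eqn:CondRepH21First} and \eqref{eqn:CondRepH21Second}, into the structural results already established for $Q_1$, processing them one after the other. First I would fix an element of $\coh_{21}$ at a given grade $f$ and choose any representative $\ket{\psi}\in V^{(p/s)}_{N,f}$. Property \eqref{eqn:CondRepH21First} says $Q_1\ket{\psi}=0$, so $\ket{\psi}$ is a $Q_1$-cocycle and determines a class in $\coh_{Q_1}$; by \cref{prop:structureHQ1} that class is represented, up to a $Q_1$-coboundary, by $G\ket{\psi'}=\ket{\chi}\otimes\ket{\psi'}$ for some $\ket{\psi'}\in V^{(p/s)}_{\tilde N,\tilde f}$ (with $\ket{\psi'}=0$ allowed for the trivial class), and this already forces $f\geq\ell$ and $\tilde f=f-\ell$. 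The bookkeeping point I would stress is that passing from $\ket{\psi}$ to $G\ket{\psi'}$ is legitimate at the level of $\coh_{21}$, not merely of $\coh_{Q_1}$: the discarded term is a pure coboundary $Q_1\ket{\beta}$, which by \eqref{eqn:ZeroInH21} (take $\ket{\alpha}=0$) represents the trivial element of $\coh_{21}$. Hence I may assume from the outset that $\ket{\psi}=G\ket{\psi'}$.

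Next I would bring in the second property \eqref{eqn:CondRepH21Second}, namely $Q_2\ket{\psi}=Q_1\ket{\phi}$ for some $\ket{\phi}$. Substituting $\ket{\psi}=G\ket{\psi'}$ and rewriting the left-hand side with \cref{lem:Q2onG} produces
\begin{equation*}
(-1)^\ell\bigl(G(\tilde Q\ket{\psi'})+Q_1\ket{\phi'}\bigr)=Q_1\ket{\phi}
\end{equation*}
for some $\ket{\phi'}$. Moving the two $Q_1$-exact terms to one side, this says exactly that $G(\tilde Q\ket{\psi'})$ is a $Q_1$-coboundary. Here \cref{corr:useful1} delivers the conclusion: a state of the form $G(\cdot)=\ket{\chi}\otimes(\cdot)$ can be $Q_1$-exact only if its $V_{\tilde N}$-factor vanishes, so $\tilde Q\ket{\psi'}=0$. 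This is precisely the asserted form, and $\ket{\psi'}$ is non-zero for any non-trivial class.

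I expect the routine part to be the substitution together with the grading identities $f\geq\ell$ and $\tilde f=f-\ell$, which are inherited directly from \cref{prop:structureHQ1}. The one step genuinely carrying the argument is the appeal to \cref{corr:useful1}: it is the rigidity statement that $\ket{\chi}$-prefixed states are never $Q_1$-exact except trivially that upgrades the weak information ``$G(\tilde Q\ket{\psi'})$ is $Q_1$-exact'' to the sharp equation $\tilde Q\ket{\psi'}=0$. The only real subtlety I anticipate is the coboundary accounting in the first step — verifying that the \cref{prop:structureHQ1} replacement does not alter the $\coh_{21}$ class rather than just the $\coh_{Q_1}$ class — and this is settled cleanly by the characterization \eqref{eqn:ZeroInH21} of the trivial element.
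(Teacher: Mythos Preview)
Your proposal is correct and follows essentially the same route as the paper's proof: reduce to the form $G\ket{\psi'}$ via \cref{prop:structureHQ1}, then combine \eqref{eqn:CondRepH21Second} with \cref{lem:Q2onG} and invoke \cref{corr:useful1} to force $\tilde Q\ket{\psi'}=0$. Your explicit justification that discarding the $Q_1$-coboundary does not change the $\coh_{21}$ class (via \eqref{eqn:ZeroInH21}) is a nice clarification the paper leaves implicit in its ``without loss of generality''.
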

This proposition has at least two important consequences. The first one is a quite straightforward consequence of the tic-tac-toe lemma :
\begin{corollary}\label{cor:iso}
For $N>2(\ell+1)$ the cohomology of $Q$ is isomorphic to $\coh_{21}$.
\end{corollary}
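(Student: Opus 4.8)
The plan is to obtain the isomorphism as a direct application of the tic-tac-toe lemma (\cref{prop:ttt}), so that the only real task is to verify its single hypothesis. Recall that the double complex associated to the subdivision $S=S_1\sqcup S_2$ carries a bigrading by $(f_1,f_2)$, where $f_1$ and $f_2$ count the fermions on $S_1$ and $S_2$ respectively; since $Q_1$ raises $f_1$ by one and $Q_2$ raises $f_2$ by one, this bigrading descends first to $\coh_{Q_1}$ and then to $\coh_{21}=\coh_{Q_2}(\coh_{Q_1})$. In the language of \cref{fig:doublecomplex}, a \emph{row} is the set of entries with a fixed value of $f_1$. Thus the only assumption of the tic-tac-toe lemma left to check is that the nontrivial part of $\coh_{21}$ is concentrated in a single such row.

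To verify this I would invoke \cref{prop:structureRepH21}. For $N>2(\ell+1)$ it guarantees that every nontrivial element of $\coh_{21}$ admits a representative of the form $\ket{\psi}=G\ket{\psi'}=\ket{\chi}\otimes\ket{\psi'}$, where $\ket{\psi'}$ lives on $S_2$. Since $\ket{\chi}=|0\,\underbrace{1\cdots1}_{\ell}\,0\rangle$ contains exactly $\ell$ fermions on $S_1$, every such representative has $f_1=\ell$. Because $f_1$ is a well-defined grading on $\coh_{21}$ (inherited from the bigrading of the double complex), this pins the entire nontrivial part of $\coh_{21}$ into the single row $f_1=\ell$; the residual grading freedom $f_2=f-\ell$ merely distributes these classes along that one row as the total grade $f$ varies. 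Hence the single-row hypothesis is met.

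With the single-row structure established, the tic-tac-toe lemma immediately yields $\coh_{21}\cong\coh_Q$, which is the assertion of the corollary. The only substantive content of the argument is the single-row property, and this is precisely what \cref{prop:structureRepH21} delivers: once one knows that all nontrivial classes can be represented with the fixed $\ell$-cluster $\ket{\chi}$ occupying $S_1$, the conclusion is a direct reading-off. I therefore expect no obstacle beyond correctly identifying the ``row'' of the tic-tac-toe lemma with the $f_1$-grading and confirming that $G\ket{\psi'}$ indeed forces $f_1=\ell$.
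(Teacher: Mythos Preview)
Your proposal is correct and follows essentially the same route as the paper: both arguments invoke \cref{prop:structureRepH21} to see that every nontrivial class of $\coh_{21}$ is represented by $G\ket{\psi'}=\ket{\chi}\otimes\ket{\psi'}$ and hence has $f_1=\ell$, and then apply the tic-tac-toe lemma (\cref{prop:ttt}) to conclude $\coh_{21}\simeq\coh_Q$. The paper's justification is terser (and phrases the single-row property in terms of $\coh_{Q_1}$, which a fortiori implies it for $\coh_{21}$), but the substance is the same.
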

\begin{proof}
This is an immediate consequence of \cref{prop:ttt,prop:structureRepH21} since all non-trivial elements of $\coh_{Q_1}$ can be represented by states which have $f_1=\ell$ particles on $S_1$.
\qedhere
\end{proof}

\paragraph{Cohomology isomorphism.} This proposition reveals an interesting insight into the structure of the representatives of $\coh_{21}$ at $N$ sites: they can be constructed from the elements of the kernel of $\tilde Q$ at $\tilde N=N-(\ell+2)$ sites. This observation suggests to lift $g$ to a mapping between cohomology groups
\begin{equation*}
  G^\sharp : \coh_{\tilde Q}(V_{\tilde N}^{(p/s)})\to \coh_{21}, \quad [|\psi'\rangle]_{\coh_{\tilde Q}} \mapsto [G|\psi'\rangle]_{\coh_{21}}.
\end{equation*}
For this mapping to be well-defined, we need to verify that $G$ sends cocycles onto cocycles, and coboundaries onto coboundaries :
\begin{itemize}
\item Let us start with a $\tilde Q$-cocycle $|\psi'\rangle$ : $\tilde Q|\psi'\rangle=0$. We need to  check that $|\psi\rangle  = G|\psi'\rangle$ obeys \eqref{eqn:CondRepH21}. First, it is trivial to see that $Q_1|\psi\rangle = 0$. Second, from \cref{lem:Q2onG} it follows that there is a state $|\phi'\rangle$ such that $Q_2 |\psi\rangle = Q_1 |\phi'\rangle$. This implies that $|\psi\rangle$ is a cocycle in the sense of the two-step cohomology.
\item Next, suppose that $|\psi'\rangle = \tilde Q|\psi''\rangle$ is a $\tilde Q$-coboundary. In this case, adapting slightly the logic of the derivation of \cref{lem:Q2onG} it is not difficult to see that there is a state $|\phi''\rangle$ such that
\begin{equation*}
  G|\psi'\rangle = Q_2((-1)^\ell g|\psi''\rangle)+Q_1(-|\phi''\rangle).
\end{equation*}
Since $Q_1(G|\psi''\rangle)=0$ trivially, the right-hand side of this equation is a coboundary in the sense of the two-step cohomology, and represents thus the trivial element of $\coh_{21}$.
\end{itemize}

\begin{proposition}
  For $N > 2(\ell+1)$ the mapping $G^\sharp: \coh_Q(V_{\tilde N}^{(p/s)}) \to \coh_{21}$ is an isomorphism.
  \begin{proof}
    We show that $G^\sharp$ is an isomorphism by establishing surjectivity and injectivity. We prove these by analyzing the action of $G$ on representatives. Notice that since we showed that $G^\sharp$ is well-defined, we may choose representatives up to coboundaries, and thus work with the convenient forms given in \cref{prop:structureHQ1,prop:structureRepH21}.
    
    First, surjectivity of $G^\sharp$ states that each element of $\mathcal H_{21}(V_N^{(p/s)})$ has a pre-image in $H_Q(V_{\tilde N}^{(p/s)})$. On the level of representatives this means that for any $|\psi\rangle$ representing an element of $\coh_{21}$ there is $|\psi'\rangle \in \ker Q_1$ such that $|\psi\rangle = G|\psi'\rangle$. Without loss of generality, we may choose $|\psi\rangle$ as in \cref{prop:structureRepH21}, which proves the statement.
    
    Second, injectivity of $G^\sharp$ states that $\ker G^\sharp = [0]_{\coh_{\tilde Q}}$. It follows from \eqref{eqn:ZeroInH21} that at the level of representatives, this is equivalent to say that
    \begin{equation*}
      G|\psi'\rangle = Q_2|\alpha\rangle + Q_1 |\beta\rangle \quad \text{with} \quad Q_1|\alpha\rangle = 0
    \end{equation*}
    implies $|\psi'\rangle \in \im \tilde Q$. Since $Q_1|\alpha\rangle = 0$ we may without loss of generality assume that $|\alpha\rangle = G|\alpha'\rangle$ for some $|\alpha'\rangle \in V_{\tilde N}^{(p/s)}$. Using \cref{lem:Q2onG} we infer the existence of a state $|\alpha''\rangle$ such that
    \begin{equation*}
    G|\psi'\rangle = G \left((-1)^\ell \tilde Q|\alpha'\rangle\right) + Q_1 \left((-1)^\ell|\alpha''\rangle+|\beta\rangle\right).
    \end{equation*}
    From \cref{corr:useful1} it follows thus $|\psi'\rangle = (-1)^\ell \tilde Q|\alpha'\rangle \in \im \tilde Q$, what concludes the prove of injectivity.
    \end{proof}
    \label{prop:CohomIsomorph}
\end{proposition}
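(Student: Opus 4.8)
The plan is to prove that $G^\sharp$ is an isomorphism by establishing surjectivity and injectivity separately, exploiting the structural results already obtained for representatives. Since \cref{prop:CohomIsomorph} asserts $G^\sharp$ is well-defined, I may always work with representatives chosen in the convenient normal forms given by \cref{prop:structureHQ1,prop:structureRepH21}, adjusting any representative by a coboundary without changing its cohomology class.

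For surjectivity, I would argue as follows. Let $[|\psi\rangle]_{\coh_{21}}$ be an arbitrary element of $\coh_{21}$. By \cref{prop:structureRepH21}, every element of $\coh_{21}$ at grade $f$ can be represented by a state of the form $|\psi\rangle = G|\psi'\rangle$ with $\tilde Q|\psi'\rangle = 0$, for some $|\psi'\rangle \in V^{(p/s)}_{\tilde N,\tilde f}$. The condition $\tilde Q|\psi'\rangle = 0$ means precisely that $|\psi'\rangle$ is a $\tilde Q$-cocycle, so $[|\psi'\rangle]_{\coh_{\tilde Q}}$ is a well-defined class in $\coh_{\tilde Q}(V_{\tilde N}^{(p/s)})$, and by construction $G^\sharp([|\psi'\rangle]_{\coh_{\tilde Q}}) = [G|\psi'\rangle]_{\coh_{21}} = [|\psi\rangle]_{\coh_{21}}$. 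This exhibits the required pre-image and establishes surjectivity.

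For injectivity, the task is to show $\ker G^\sharp = \{[0]_{\coh_{\tilde Q}}\}$, i.e.\ that $G|\psi'\rangle$ representing the trivial class of $\coh_{21}$ forces $|\psi'\rangle \in \im \tilde Q$. Here I use the characterization \eqref{eqn:ZeroInH21} of the zero element: $G|\psi'\rangle = Q_2|\alpha\rangle + Q_1|\beta\rangle$ with $Q_1|\alpha\rangle = 0$. The key simplification is that since $|\alpha\rangle \in \ker Q_1$, \cref{prop:structureHQ1} lets me take $|\alpha\rangle = G|\alpha'\rangle$ for some $|\alpha'\rangle \in V_{\tilde N}^{(p/s)}$ (absorbing any $Q_1$-coboundary part into $|\beta\rangle$). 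Applying \cref{lem:Q2onG} to $Q_2(G|\alpha'\rangle)$ rewrites it as $(-1)^\ell(G(\tilde Q|\alpha'\rangle) + Q_1|\alpha''\rangle)$, so the whole right-hand side becomes $G((-1)^\ell\tilde Q|\alpha'\rangle) + Q_1((-1)^\ell|\alpha''\rangle + |\beta\rangle)$. Equating with $G|\psi'\rangle$ and invoking \cref{corr:useful1} — which says that $G$ applied to a state equals a $Q_1$-coboundary only when both sides vanish — I conclude $|\psi'\rangle = (-1)^\ell\tilde Q|\alpha'\rangle \in \im \tilde Q$, hence $[|\psi'\rangle]_{\coh_{\tilde Q}} = [0]$.

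I expect the injectivity direction to be the main obstacle, and the crux lies in the two reductions that precede the application of \cref{corr:useful1}: first justifying that $|\alpha\rangle$ may be taken in the image of $G$ (using the kernel characterization of $Q_1$ and that the unwanted part is itself a coboundary), and second handling the $Q_1$-coboundary terms carefully so that the clean hypothesis of \cref{corr:useful1} — a $G$-image equal to a pure $Q_1$-coboundary — is actually met. The surjectivity direction, by contrast, is essentially immediate once \cref{prop:structureRepH21} is in hand, since that proposition already delivers representatives in exactly the form $G|\psi'\rangle$ with $|\psi'\rangle$ a $\tilde Q$-cocycle.
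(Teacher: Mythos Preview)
Your proposal is correct and follows essentially the same approach as the paper's proof: surjectivity via \cref{prop:structureRepH21}, and injectivity by writing $|\alpha\rangle = G|\alpha'\rangle$ via \cref{prop:structureHQ1}, applying \cref{lem:Q2onG}, and then invoking \cref{corr:useful1}. Your added remark that the $Q_1$-coboundary part of $|\alpha\rangle$ can be absorbed into $|\beta\rangle$ makes explicit a step the paper hides behind ``without loss of generality''.
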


\paragraph{Cut (and paste).} We are now in a position to prove the \textit{cut (and paste)} cohomology isomorphism :
\begin{proof} [Proof of \cref{thm:cutandpaste}]

First, \cref{cor:iso} states that
\begin{equation*}
  \coh_{Q} \simeq \coh_{21}
\end{equation*}
via the tic-tac-toe isomorphism. Second, we proved in \cref{prop:CohomIsomorph} that
\begin{equation*}
\coh_{21} \simeq \coh_{\tilde Q}.
\end{equation*}
provided that $N>2(\ell+1)$ via $G^\sharp$. These relations hold for both boundary conditions $(p)$ and $(s)$. Since the isomorphism relation is transitive, we conclude that $\coh_{\tilde Q}\simeq \coh_{Q}$ for $N>2(\ell+1)$. Furthermore, since the map $G$ adds $\ell$ particles to a state, we conclude that the non-trivial element of $\coh_{Q}$ have $\ell$ more particles than the non-trivial elements of $\coh_{\tilde Q}$, and therefore
\begin{equation*}
  \coh_{\tilde Q}^f \simeq \coh_{Q}^{f+\ell}.
\end{equation*}

\qedhere
\end{proof}

The theorem implies in particular $\dim \coh_{\tilde Q}^f = \dim \coh_{Q}^{f+\ell}$ which we are going to apply repeatedly in the next sections.

\subsection{Closed chains}
\label{sec:closedchain}

Here we present the proof of \cref{prop:closedchain}.
\begin{proof}[Proof of \cref{prop:closedchain}.]
To prove the statement about the zero-energy ground states we solve the corresponding cohomology problem and use the fact that zero-energy states are in one-to-one correspondence with cohomology elements.

For $n=0$ the solution to the cohomology problem can be found in \cref{lem:periodicshort1}. For $n>1$ and $0 \leq p \leq \ell-1$ we first use \cref{thm:cutandpaste} $n-1$ times to reduce the cohomology problem of an $N=n(\ell+2)+p+1$ site closed chain to the cohomology problem for a chain with periodic boundary conditions and $N'=\ell+p+3$ sites. Its solution can be found in \cref{lem:periodicshort2}. We thus find that $\coh_Q$ is one-dimensional and non-trivial only at grade $N'-3+(n-1)\ell = n \ell+p$.

Furthermore, for $n>1$ and $p=\ell,\ell+1$ we use \cref{thm:cutandpaste} $n$ times to reduce the cohomology problem of an $N=n(\ell+2)+p+1$ site closed chain to the cohomology problem of an $N'=p+1$ site closed chain. \cref{lem:periodicshort1} provides its solution. For $p=\ell$ we again find that $\coh_Q$ is one-dimensional, and non-trivial only at grade $N'-1+n\ell = n \ell+p$. Finally, for $p=\ell+1$ we find that $\coh_Q$ has dimension $\ell+1$ and is non-trivial only at grade $N'-1+n\ell = (n+1) \ell$. 

\qedhere
\end{proof}

\subsection{Open chains}\label{sec:specialchain}

Here we present the proof of proposition \cref{prop:specialchain}.
\begin{proof}[Proof of \cref{prop:specialchain}]
As above, we use the fact that zero-energy states are in one-to-one correspondence with cohomology elements. This allows to prove the statement about the zero-energy ground states by solving the corresponding cohomology problem. 

For $0 \leq p \leq \ell-1$ we first use proposition \cref{thm:cutandpaste} $n-1$ times to reduce the cohomology problem of an $N=n(\ell+2)+p+1$ site open chain with special boundary conditions to the cohomology problem of an $N'=\ell+p+3$ site open chain with the same special boundary conditions. The solution to this problem can be found in \cref{lem:specialshort2}. We thus find that $\coh_Q$ has dimension one when (i) $c_1,c_N \geq p$ and $c_1+c_N \leq \ell+p$ for $0 \leq p \leq \ell-1$. In both cases, it is non-trivial only at grade $N'-3+(n-1)\ell = n \ell+p$. Furthermore, it is also one-dimensional when (ii) $c_1,c_N < p$ and $c_1+c_N \geq p-1$ for $1 \leq p \leq \ell-1$, and non-trivial only at grade $N'-4+(n-1)\ell = n \ell+p-1$.

For $p=\ell,\ell+1$ we use \cref{thm:cutandpaste} $n$ times to reduce the cohomology problem of an $N=n(\ell+2)+p+1$ site open chain with special boundary conditions to the cohomology problem of an $N'=p+1$ site open chain with the same special boundary conditions. We deduce the solution to this problem from \cref{lem:special}. Again, we find that $\coh_Q$ is one-dimensional when (i) $c_1,c_N \geq p$ and $c_1+c_N \leq \ell+p$ for $p=\ell$, and non-trivial only at grade $N'-1+n\ell = n \ell+p$. Furthermore, it is one-dimensional for (ii) $c_1,c_N < p$ and $c_1+c_N \geq p-1$ for $p =\ell, \ell+1$. In this case, it is non-trivial only at grade $N'-2+n\ell = n \ell+p-1$.

\qedhere
\end{proof}

\section*{Acknowledgements} LH acknowledges funding by the John Templeton Foundation and a DOE early career award. The work of LH was in part performed at the Aspen Center for Physics, which is supported by National Science Foundation grant PHY-1066293. CH is supported by the Belgian Interuniversity 
Attraction Poles Program P7/18 through the network DYGEST (Dynamics, Geometry and 
Statistical Physics). He acknowledges the Stanford Institute for Theoretical Physics where part of this work was done for hospitality.

\appendix

\section{Alternative cohomology proofs}\label{app:altproofs}
In this appendix, we present an alternative proof of the cohomology result for the open chain with free boundary conditions and general length $N$ and for the closed chain with periodic boundary conditions and length $N=n(\ell+2)$. These results are contained in \cref{prop:specialchain} with special boundary conditions $s=(\ell,\ell)$ and \cref{prop:closedchain}, respectively. The arguments are relatively simple compared to the proofs of the full cohomology results for closed and open chains. This is the reason we include them here.

\paragraph{Single-site result.} Our strategy relies on the tic-tac-toe lemma. We shall often take the sublattice, $S_1$, to be a single site or a collection of isolated sites that are more than $\ell$ sites apart. For this reason we will use the cohomology result for a single site again and again, so let us state it here. The cohomology of a single site is trivial if the site can be both empty and occupied : The empty state is not in the kernel of $Q_1$ and the occupied state is in the image of $Q_1$. Therefore both states belong to the trivial equivalence class. However, the cohomology is non-trivial if the site has to be empty due to either explicit boundary conditions or effective boundary conditions derived from the occupation of neighboring sites belonging to sublattice $S_2$. This is because in this case the empty state is in the kernel of $Q_1$ and it is clearly not in the image of $Q_1$.

\subsection{Open chains with free boundary conditions}\label{sec:freebc}
We address the cohomology problem for an open chain with free boundary conditions which means that there are no restrictions on the length of connected particle clusters containing the first or last site (other than the exclusion constraints). We will prove that the cohomology of the $M_{\ell}$ model on an open chain of length $N=n(\ell+2) + p + 1,\,n\geq 0, \,p=0,\dots,\ell+1,$ with free boundary conditions has dimension one for $p=\ell+1$ and $p=\ell$ and has dimension zero otherwise. It is non-trivial only at grade $f=(n+1)\ell$ in both cases.

\begin{proof}
We first define the sublattices we use to apply the tic-tac-toe lemma. A convenient choice for $S_1$ is given by
\begin{equation*}
  S_1=\{j(\ell+2)+1\}_{j=0}^n = \{1,\ell+3,\dots, 2\ell+5,\dots, n(\ell+2)+1\}
\end{equation*}
Let us illustrate this with an example where $n=3$:
\begin{center}
  \begin{tikzpicture}
     \foreach \x in {0,1.8,3.6,5.4}
     {
     \draw[thick,xshift=\x cm] (0,0) rectangle (.3,.3);
     }
     \foreach \x in {0,1.8,3.6}
     {
     \draw[dotted,xshift=\x cm] (0.3,0.3)--(1.8,.3);
     \draw[dotted,xshift=\x cm] (0.3,0.)--(1.8,0.); 
     \draw[xshift=\x cm] (1.05,-.3) node{$\underset{\ell+1\,\text{sites}}{\underbrace{\hspace{1.5cm}}}$};
     }
     \draw[dotted] (5.7,.3)--(6.6,.3) -- (6.6,0) --(5.7,0);
     \draw(6.15,-.3) node{$\underset{p\,\text{sites}}{\underbrace{\hspace{.9cm}}}$};

  \end{tikzpicture}
\end{center}
Here the squares represent the sites of $S_1$. They are equally-spaced and isolated, separated by $\ell+1$ sites belonging to $S_2$. This choice is a natural generalization of the 3-rule which was used to study the $M_1$ model on various graphs \cite{fendley:05_2}. 

In order to find $\coh_{Q_1}$ we use the cohomology result for the single site (see end of \cref{sec:model}). We find that $\coh_{Q_1}$ is non-trivial if all sites of $S_1$ are empty and cannot be occupied. The states for which this is true take the form:
\begin{equation}
    |\bm{0}\,\underset{\ell\text{-cluster}}{\underbrace{1\cdots 1}} {0}\,\bm{0}\,\underset{\ell\text{-cluster}}{\underbrace{1\cdots 1}}{0}\, \bm{0}\cdots \bm{0}\,\underset{\ell\text{-cluster}}{\underbrace{1\cdots 1}} {0}\,\bm{0}\rangle \otimes |\psi\rangle. \nonumber
  \end{equation}
Here we printed in bold the sites of $S_1$. Furthermore, $|\psi\rangle$ is a state with $p$ sites. To see this we start at the first site. The first $S_1$ site, i.e. site 1, has to be empty if an only if it is adjacent to a cluster of $\ell$ particles that starts on site 2. The exclusion rule then implies that site $\ell+2$ has to be empty. Next, look at the second site in $S_1$, i.e. site $1+(\ell+2)$, for which we also require that it be empty. We see that the problem is identical to the one for the first site and thus a cluster of $\ell$ particles has to start on site $2+(\ell+2)$. By recursion we arrive at the last site in $S_1$, i.e. site $n(\ell+2)+1$. Again it has to be adjacent to an $\ell$-cluster. For $p< \ell$ the problem has no solution and therefore the cohomology $\coh_{Q_1}$ is empty. Conversely, if $p=\ell,\ell+1$ the solution is unique:
\begin{equation}
    |\psi\rangle =
    \begin{cases}
       |\underset{\ell\text{-cluster}}{\underbrace{1\cdots 1}}0\rangle ,& p = \ell+1,\\
       |\underset{\ell\text{-cluster}}{\underbrace{1\cdots 1}}\rangle,& p = \ell.
    \end{cases} \nonumber
\end{equation}
  
We conclude that the cohomology $\coh_{Q_1}$ of $Q_1$ for a chain of length $N=n(\ell+2) + p + 1$ has dimension one if $p=\ell$ or $p=\ell+1$, and zero otherwise. Since the dimension of $\coh_{Q_1}$ is either zero or one, the second step, i.e. the computation of the cohomology $\coh_{Q_2}(\coh_{Q_1})$ of $Q_2$ acting within $\coh_{Q_1}$, is trivial. The representatives have $f=(n+1)\ell$ fermions, and so do the ground states.

\qedhere

\end{proof}

\subsection{Closed chains with $N=n(\ell+2)$}
\label{app:pbc}
In this appendix we present an alternative proof of the cohomology computation for periodic chains where the number of sites is a multiple of $\ell+2$. From \cref{prop:closedchain} we know that the Hamiltonian of the periodic chain with $N=n(\ell+2)$ sites, $n \geq 1$, has $\ell+1$ zero-energy ground states with fermion number $f=n\ell$.

\begin{proof}
  As in the appendix on open chains, \cref{sec:freebc}, let us choose the subset of lattice sites $S_1 =\{k(\ell+2)+1\}_{k=0}^n$. Any two consecutive sites of $S_1$ are separated by $\ell+1$ sites belonging to $S_2$, including the last one and the first one. Using the cohomology result for a single site, we obtain once more that all elements of $\coh_{Q_1}$ can be represented by simple basis states which are such that all sites on $S_1$ are empty and cannot be occupied. Hence, the $k$-th site in $S_1$ must have a cluster of $a_k$ particles to its left, and a cluster of $b_k$ particles to its right, such that the sum of their lengths is at least $\ell$. 
  Moreover, the exclusion rule stipulates that between two consecutive $S_1$ sites there can be most $\ell$ particles, which implies
  \begin{equation}
    b_{k} + a_{k+1} \leq \ell. \nonumber
  \end{equation}
  Here, it is understood that for $k=n$ we write $a_{n+1} = a_0$. Upon elimination of the $b$'s we conclude from these inequalities that the $a$'s have the property $a_k \geq a_{k+1}$ for all $k$, and hence
  \begin{equation*}
    a_0 \geq a_1 \geq \cdots \geq a_{n} \geq a_0. 
  \end{equation*}
  Hence, $a_k=a_0$ for all $k=0,\dots,n$. It follows immediately that $b_k=\ell-a_0$ for all $k=0,\dots,n$. The particle arrangements around the sites of $S_1$ look therefore all alike for a given value of $a_0$. The basis of $\mathcal H_{Q_1}$ can thus be represented by the following $\ell+1$ states
  \begin{equation}
    |\bm 0 \underset{(\ell-a_0)\text{-cluster}}{\underbrace{1\cdots 1}} 0 \underset{a_0\text{-cluster}}{\underbrace{1\cdots 1}}\rangle^{\otimes (n+1)}, \quad a_0=0,\dots,\ell.
    \label{eqn:perreps}
  \end{equation}
  which have all fermion number $f=(n+1)\ell$.
    
  The computation of the cohomology of $Q_2$ acting on $\coh_{Q_1}$ is quite simple. One checks easily that all the representatives in \eqref{eqn:perreps} are in the kernel of $Q_2$. Conversely, $\coh_{Q_1}$ does not contain any element which leads to one of these representatives upon action with $Q_2$. We conclude that the states given in \eqref{eqn:perreps} constitute representatives for the basis of $\coh_{Q_2}(\coh_{Q_1})$. Hence $\dim \coh_Q=\dim \coh_{Q_2}(\coh_{Q_1}) = \ell+1$. Furthermore, the cohomology is non-trivial only at grade $f=(n+1)\ell$ as follows from the representatives \eqref{eqn:perreps}.

\qedhere
\end{proof}


\section{Short chains}\label{app:short}
In this appendix we solve the cohomology problem of short chains, $1\leq N \leq 2\ell+2$, and prove the propositions stated in \cref{sec:shortlems}. We restate them below for convenience. Furthermore, we use the single-site result stated in \cref{app:altproofs}, and the following proposition, which allows to infer representatives for the cohomology of $Q$ from the two-step cohomology :
\begin{proposition}
\label{lem:repr}
Let $|\psi\rangle$ be a representative of a non-zero element of $\coh_{Q_2}(\coh_{Q_1})$  with no particles on $S_1$ (i.e. $f_1=0$), then $|\psi\rangle$ is also a representative of a non-zero element of $\coh_Q$.
\begin{proof}
  Suppose the opposite : $|\psi\rangle = Q |\psi'\rangle = Q_1 |\psi'\rangle+Q_2 |\psi'\rangle$ for some $|\psi'\rangle$. As $|\psi\rangle$ has no particles on $S_1$ we have $Q_1 |\psi'\rangle=0$, and thus $|\psi\rangle = Q_2 |\psi'\rangle$. Hence according to \eqref{eqn:ZeroInH21} the state $|\psi\rangle$ represents the trivial equivalence class $\coh_{Q_2}(\coh_{Q_1})$, which is a contradiction.
\end{proof}
\end{proposition}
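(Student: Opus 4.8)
The plan is to prove both halves of the statement — that $\ket{\psi}$ is a genuine $Q$-cocycle and that it represents a \emph{non-zero} class in $\coh_Q$ — by exploiting the grading of the Hilbert space by the $S_1$-fermion number $f_1$, under which $Q_1$ raises $f_1$ by one while $Q_2$ preserves it. The whole argument is driven by the hypothesis $f_1=0$, which pins $\ket{\psi}$ to the bottom of this grading and therefore annihilates everything produced by $Q_1$.

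First I would check that $\ket{\psi}$ actually lies in the kernel of the full supercharge. As a representative of an element of $\coh_{21}$ it satisfies \eqref{eqn:CondRepH21}, so $Q_1\ket{\psi}=0$ and $Q_2\ket{\psi}=Q_1\ket{\phi}$ for some $\ket{\phi}$. The left-hand side of the second relation lives in the $f_1=0$ sector, whereas $Q_1\ket{\phi}$ lives in $f_1\geq 1$; matching $f_1$-grades forces both sides to vanish, so $Q_2\ket{\psi}=0$ and hence $Q\ket{\psi}=(Q_1+Q_2)\ket{\psi}=0$. Thus $\ket{\psi}$ does represent a class in $\coh_Q$.

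For non-triviality I would argue by contradiction, along the route suggested by \eqref{eqn:ZeroInH21}. Assume $\ket{\psi}=Q\ket{\psi'}=Q_1\ket{\psi'}+Q_2\ket{\psi'}$ for some potential $\ket{\psi'}$. Decomposing $\ket{\psi'}$ into its $f_1$-homogeneous pieces and again matching grades against $\ket{\psi}$, which sits entirely at $f_1=0$, isolates the bottom component $\ket{\psi'}^{(0)}$. Granting for the moment that the potential may be taken purely at $f_1=0$ (discussed next), one obtains $\ket{\psi}=Q_2\ket{\psi'}$ with $Q_1\ket{\psi'}=0$, which is precisely the form \eqref{eqn:ZeroInH21} of a representative of the trivial class $[0]_{\coh_{21}}$. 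This contradicts the assumption that $\ket{\psi}$ represents a non-zero element of $\coh_{21}$, so $\ket{\psi}$ cannot be a $Q$-coboundary and therefore represents a non-zero element of $\coh_Q$.

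I expect the delicate point to be exactly the grading bookkeeping in the last step, namely the claim that the potential can be chosen in the $f_1=0$ sector so that $Q_1\ket{\psi'}=0$. A priori the cross-terms linking neighbouring $f_1$-grades through $Q_1$ and $Q_2$ need not cancel on their own, so I would secure this by a descending induction on the top $f_1$-grade of $\ket{\psi'}$: since $\ket{\psi}$ carries no $f_1>0$ content, the top component is annihilated by $Q_1$, and using $Q^2=0$ together with the characterisation of $\coh_{Q_1}$ in \cref{prop:structureHQ1} one peels it off by subtracting a $Q$-exact term, without altering $Q\ket{\psi'}$, until only the $f_1=0$ part remains.
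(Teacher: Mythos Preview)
Your approach is the same as the paper's --- argue by contradiction and show that $\ket{\psi}$ would represent the trivial class in $\coh_{21}$ via \eqref{eqn:ZeroInH21} --- but you are considerably more careful. In particular, you supply the verification that $Q\ket{\psi}=0$, which the paper's proof omits, and you correctly flag the $f_1$-grading subtlety behind the paper's one-line claim ``$Q_1\ket{\psi'}=0$'': as you note, $Q_2\ket{\psi'}$ can also carry $f_1\!>\!0$ components, so that conclusion does not follow immediately from $\ket{\psi}$ sitting at $f_1=0$.

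The one weak spot is your appeal to \cref{prop:structureHQ1} in the descending induction. That proposition concerns the specific sublattice $S_1=\{1,\dots,\ell+2\}$ of \cref{sec:iso} and places $\coh_{Q_1}$ at $f_1=\ell$, not at $f_1=0$; it says nothing about the sublattice choices (isolated sites) for which \cref{lem:repr} is actually invoked in \cref{app:short}. What your induction genuinely needs is that $\coh_{Q_1}$ vanishes for $f_1>0$: then the top component $\ket{\psi'}^{(K)}\in\ker Q_1$ with $K\geq 1$ is $Q_1$-exact and can be peeled off by subtracting $Q\ket{\gamma}$ as you describe. This vanishing \emph{does} hold in every application of the proposition in the paper (it is precisely the single-site result that all $S_1$ sites must be empty), so your argument is sound there --- but it is an additional hypothesis that neither you nor the paper states explicitly, and without it the last step is not secured.
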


\subsection{Special boundary conditions and $1\leq N\leq \ell+2$}
\label{app:specialshort1}

In this section, we consider special boundary conditions $s=(c_1,c_N)$ where $0\leq c_1,c_N\leq \ell$. We may encounter the special case where $N \leq \min(\ell, c_1,c_N)$. In this case the length of any particle cluster is bounded only by $N$, there are no additional constraints. Furthermore, we point out that when $c_1 \geq N$, but $c_N <N$, the maximal length of the cluster that starts on the first site is actually $N-1$. Therefore, $c_1$ is effectively $N-1$ and thus the boundary conditions $s=(c_1 \geq N , c_N <N)$, and equivalently $s=(c_1 < N , c_N \geq N)$, do not make much sense in practice. Therefore, we will only consider the cases $0\leq c_1,c_N \leq \min(\ell, N-1)$ and the unconstrained case where the length of any particle cluster is bounded only by $N$. The result of \cref{lem:special} is illustrated in \cref{fig:specialshort1}.

\begin{repproposition}{lem:special}
The cohomology, $\coh_Q (V^{(s)}_N)$, of a chain of length $1\leq N\leq \ell+2$ with special boundary conditions given by $s=(c_1,c_N)$ with $0\leq c_1,c_N \leq min(\ell, N-1)$ has dimension one when (i) $c_1=c_N=N-1$ for $1\leq N\leq \ell+1$ and when (ii) $c_1,c_N \leq N-2$ and $c_1+c_N \geq N-2$ for $2 \leq N \leq \ell+2$ and dimension zero otherwise. The non-trivial elements can be represented by (i) $\ket{01\dots1}$ with $N-1$ particles and (ii) $\ket{01\dots10}$ with $N-2$ particles, respectively. The cohomology, $\coh_Q (V^{(s)}_N)$, of a chain of length $N$ in the unconstrained case, i.e. where the length of any particle cluster is bounded only by $N$, is trivial.
\end{repproposition}

\begin{figure}[ht]
\begin{center}
\includegraphics[width = 0.45\textwidth]{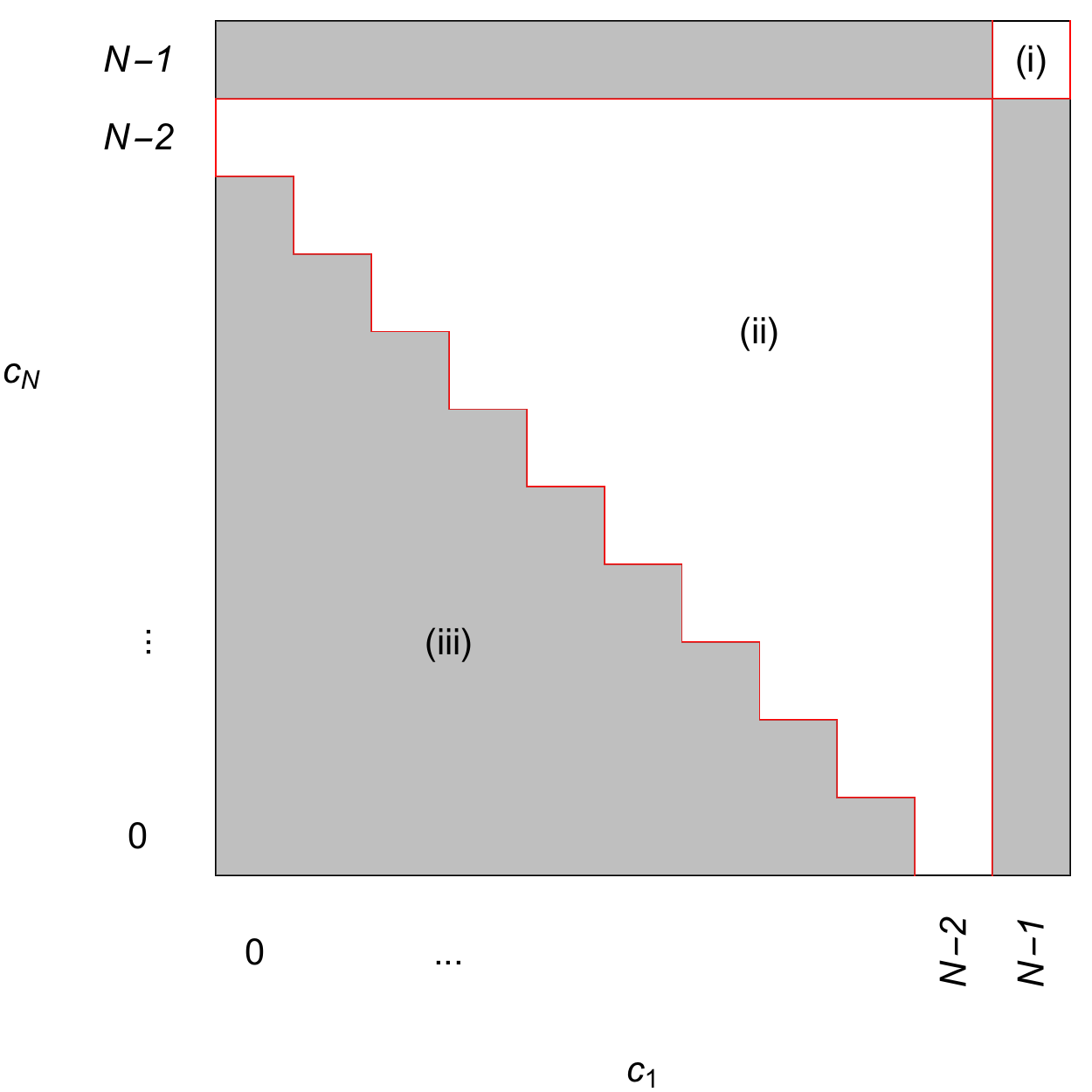}
\includegraphics[width = 0.45\textwidth]{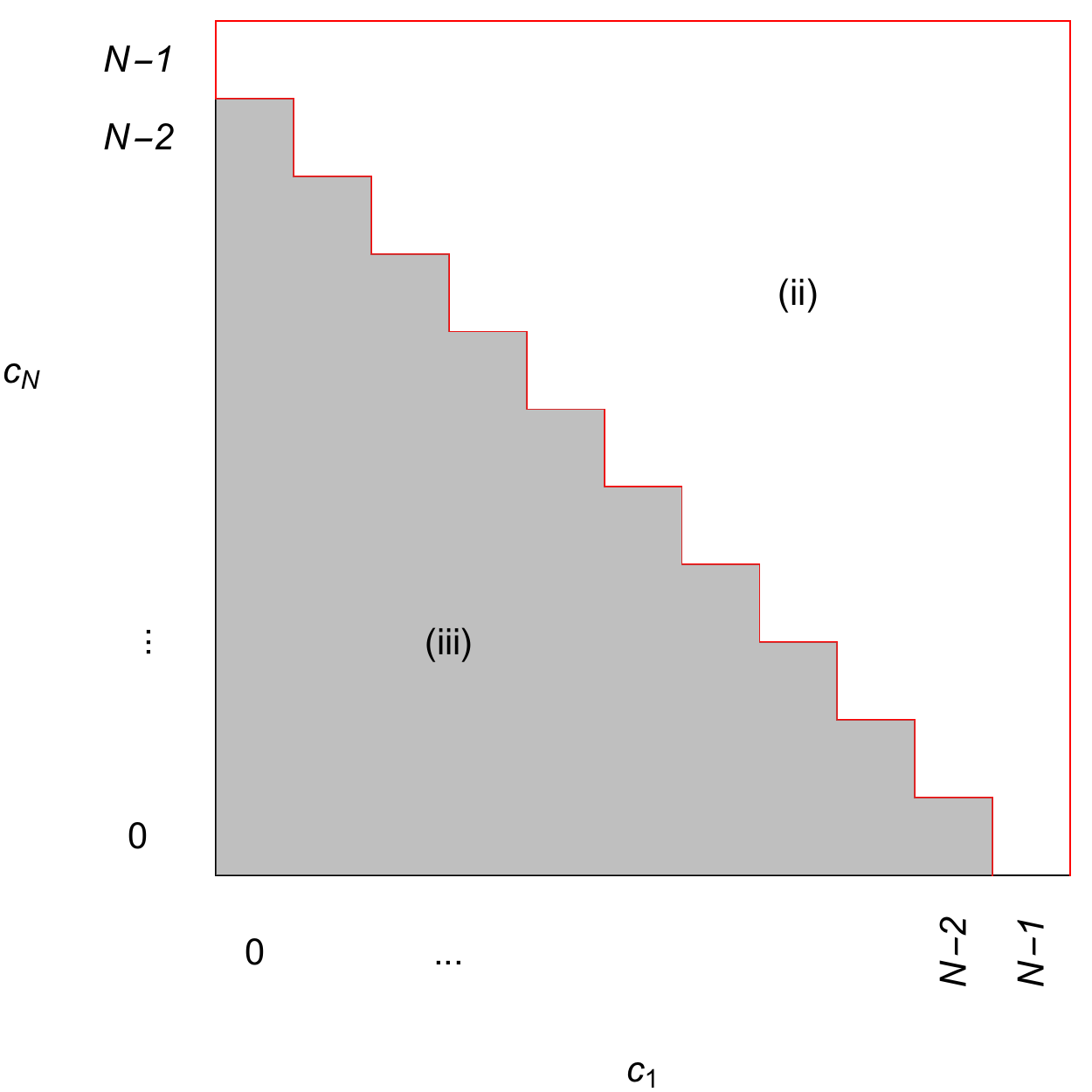}
\caption{We illustrate the structure of the cohomology for short chains with special boundary conditions. The regions corresponding to the boundary conditions for which the cohomology is trivial are gray shaded, for the boundary conditions corresponding to the white regions the dimension of the cohomology is one. The diagram on the left corresponds to $1 \leq N \leq \ell+1$, the diagram on the right is for $N=\ell+2$. The labels correspond to different cases we consider in the proposition and in the proof : (i) $c_1=c_N=N-1$, (ii) $c_1,c_N \leq N-2$ and $c_1+c_N \geq N-2$, (iii) $c_1,c_N \leq N-2$ and $c_1+c_N < N-2$. \label{fig:specialshort1}}
\end{center}
\end{figure}

In the proof we will use two choices for the sublattice, $S_1$: (1) $S_1$ contains the first site, (2) $S_1$ contains the first and the last site of the chain, in both cases all other sites belong to $S_2$. We can illustrate these choices as follows:
\begin{center}
  \begin{tikzpicture}      
     \node at (-0.8,0) {(1)};
     \foreach \x in {0}
     {
     \draw[thick,xshift=\x cm] (0,0) rectangle (.3,.3);
     }
     \foreach \x in {0}
     {
     \draw[dotted,xshift=\x cm] (0.3,0.3)--(1.8,.3);
     \draw[dotted,xshift=\x cm] (0.3,0.)--(1.8,0.); 
     \draw[dotted,xshift=\x cm] (1.8,0.)--(1.8,0.3); 
     \draw[xshift=\x cm] (1.05,-.3) node{$\underset{N-1\,\text{sites}}{\underbrace{\hspace{1.5cm}}}$};
     }
     \node at (2.8,0) {(2)};
     \foreach \x in {3.6,5.4}
     {
     \draw[thick,xshift=\x cm] (0,0) rectangle (.3,.3);
     }
     \foreach \x in {3.6}
     {
     \draw[dotted,xshift=\x cm] (0.3,0.3)--(1.8,.3);
     \draw[dotted,xshift=\x cm] (0.3,0.)--(1.8,0.); 
     \draw[xshift=\x cm] (1.05,-.3) node{$\underset{N-2\,\text{sites}}{\underbrace{\hspace{1.5cm}}}$};
     }

  \end{tikzpicture}
\end{center}
Here the squares represent the sites of $S_1$ and $S_2$ contains (1) $N-1$ or (2) $N-2$ consecutive sites.

\begin{proof}
We first solve the cohomology problem for the unconstrained case. We take sublattice choice (1): the first site belongs to $S_1$ and all other sites belong to $S_2$. Using the cohomology result for the single site it is easy to see that the cohomology of $Q_1$ is non-trivial if and only if the $S_1$ site is adjacent to a cluster of length at least $N$. This is clearly not possible since the length of $S_2$ is $N-1$. We thus find that $\coh_{Q_1}$, and therefore also $\coh_{Q_2}(\coh_{Q_1})$, is trivial, which implies that $\coh_Q$ is trivial.

Next, we solve the cohomology problem for $c_1=N-1$ and $0 \leq c_N \leq N-1$. Note that this requires that $N\leq \ell+1$ since $c_1 \leq \ell$. Again we take sublattice choice (1). The cohomology of $Q_1$ is non-trivial if and only if the $S_1$ site is adjacent to a cluster of length $N-1$. This is only possible when $c_N=N-1$. We thus find that $\coh_{Q_1}$, and therefore also $\coh_{Q_2}(\coh_{Q_1})$, is trivial for $c_1=N-1$ and $c_N < N-1$ and has dimension one when $c_1=c_N=N-1$. The non-trivial element has all $S_2$ sites occupied:
\beq
\ket{\psi}=\ket{\bm{0} \underset{N-1}{\underbrace{1\dots1}}},
\nonumber
\eeq
and thus $N-1$ particles. Similarly, we find that $\coh_{Q_2}(\coh_{Q_1})$ is trivial for $c_1 < N-1$ and $c_N=N-1$. It follows that $\coh_{Q}$ is trivial for $c_1=N-1$ and $c_N < N-1$ and for $c_1 < N-1$ and $c_N =N-1$ and it has dimension one when $c_1=c_N=N-1$. The non-trivial element can be written as $\ket{\psi}$ above using \cref{lem:repr} and has $N-1$ particles. This corresponds to case i) in \cref{lem:special} and in \cref{fig:specialshort1}. 

Finally, we turn to the case $c_1,c_N \leq N-2$. This requires that $N\geq 2$ since $c_1,c_N \geq 0$. We now take sublattice choice (2): $S_1$ contains the first and the last site and all other sites belong to $S_2$. We first note that the cohomology of $Q_1$ is equivalent to the cohomology of two independent single sites. To see this, note that the sites would not be independent if the occupancy of one site influences the constraints on the occupancy of the other site. This can only happen if there is a connected particle cluster on $S_2$ that is adjacent to both $S_1$ sites. For this particular case, this implies that all $N-2$ sites of $S_2$ are occupied. However, since $c_1,c_N \leq N-2$ we immediately find that both $S_1$ sites have to be empty in this case. We conclude that the constraints on the $S_1$ sites depend only on the occupancy of the $S_2$ sites and which means that the sites are independent. Thus using again the cohomology result of the single site, we find that the cohomology of $Q_1$ is non-trivial provided that a cluster of length $a \geq c_1$ starts on site 2 and a cluster of length $b\geq c_N$ ends on site $N-1$. We now distinguish two cases (see also the labels in \cref{fig:specialshort1}):
\begin{itemize}
\item[ii)] $c_1+c_N \geq N-2$,
\item[iii)] $c_1+c_N < N-2$.
\end{itemize}
For case ii) $\coh_{Q_1}$ has a unique non-trivial representative:
\beq
\ket{\bm{0} \underset{N-2}{\underbrace{1\dots1}} \bm{0}},
\nonumber
\eeq
this element has all $S_2$ sites occupied and thus $N-2$ particles. For case iii) all the non-trivial representatives of $\coh_{Q_1}$ can be written as:
\beq
\ket{\bm{0} \underset{c_1}{\underbrace{1\dots1}}} \otimes \ket{\psi'} \otimes \ket{\underset{c_N}{\underbrace{1\dots1}}\bm{0}},
\nonumber
\eeq
where $\ket{\psi'}$ is any state on the middle sites ($c_1+2, c_1+3, \dots, N-1-c_N$).

We now turn to the computation of $\coh_{Q_2}(\coh_{Q_1})$. For case ii) we immediately find that the dimension of $\coh_{Q_2}(\coh_{Q_1})$ is one. Using \cref{prop:ttt} we thus find that $\coh_Q$ has dimension one when $c_1,c_N \leq N-2$ and $c_1+c_N \geq N-2$. The non-trivial element can be written as $\ket{0 1 \dots 1 0}$ using \cref{lem:repr} and has $N-2$ particles. This corresponds to case ii) in \cref{lem:special} and in \cref{fig:specialshort1}. For case iii) we see that within $\coh_{Q_1}$ $Q_2$ effectively acts on a chain of length $N'=N-2-c_1-c_N$ where the length of the particle clusters is unconstrained. We have solved this case in the first paragraph of this proof. We thus find that $\coh_{Q_2}(\coh_{Q_1})$ is trivial, and therefore also $\coh_Q$, when $c_1,c_N \leq N-2$ and $c_1+c_N < N-2$.

\qedhere
\end{proof}

\subsection{Special boundary conditions and $\ell+ 3 \leq N\leq 2\ell+2$}

\begin{repproposition}{lem:specialshort2}
The cohomology, $\coh_Q (V^{(s)}_N)$, of a chain of length $\ell+ 3 \leq N\leq 2\ell+2$ with special boundary conditions given by $s=(c_1,c_N)$ with $0\leq c_1,c_N \leq \ell$ has dimension one when (i) $c_1,c_N \geq N-\ell-3$ and $c_1+c_N < N-2$ and when (ii) $c_1,c_N < N-\ell-3$ and $c_1+c_N \geq N-\ell-4$ for $\ell+3 < N \leq 2\ell+2$. Otherwise, its dimension is zero. The number of particles of the non-trivial elements is (i) $N-3$ and (ii) $N-4$, respectively.
\end{repproposition}

\begin{figure}[ht]
\begin{center}
\includegraphics[width = 0.65\textwidth]{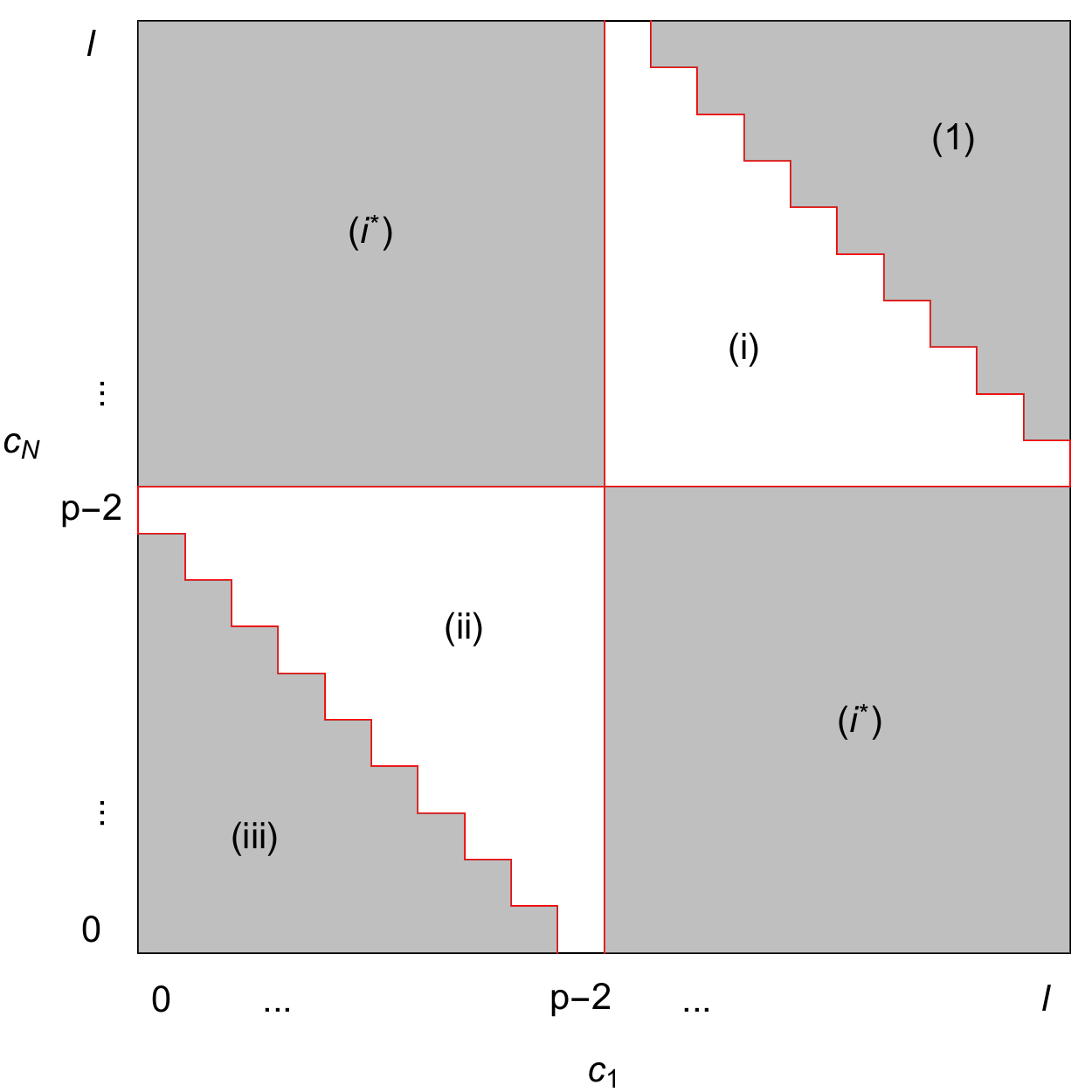}
\caption{We illustrate the structure of the cohomology for short chains with special boundary conditions. The length of the chain is $N=\ell+p+2$ with $1 \leq p \leq \ell$. The regions corresponding to the boundary conditions for which the cohomology is trivial are gray shaded, for the boundary conditions corresponding to the white regions there is a unique non-trivial cohomology element. The labels correspond to different cases we consider in the proposition and in the proof: (1) $c_1+c_N \geq N-2$, (i)  $c_1+c_N < N-2$ and $c_1, c_N \geq p-1$, (i*)  $c_1 \geq p-1$ and $c_N < p-1$ or vice versa, (ii) $c_1,c_N <p-1$ and $c_1+c_N \geq p-2$, (iii) $c_1,c_N <p-1$ and $c_1+c_N < p-2$. \label{fig:specialshort2}}
\end{center}
\end{figure}

\begin{proof}
Let us write $N=\ell+p+2$ with $1 \leq p \leq \ell$. We take $S_1$ to consist of the first and last site, and $S_2$ then consists of the remaining $\ell+p$ sites. This can be illustrated as follows:
\begin{center}
  \begin{tikzpicture}      
     \foreach \x in {3.6,6.3}
     {
     \draw[thick,xshift=\x cm] (0,0) rectangle (.3,.3);
     }
     \foreach \x in {3.6}
     {
     \draw[dotted,xshift=\x cm] (0.3,0.3)--(2.7,.3);
     \draw[dotted,xshift=\x cm] (0.3,0.)--(2.7,0.); 
     \draw[xshift=\x cm] (1.5,-.3) node{$\underset{\ell+p\,\text{sites}}{\underbrace{\hspace{2.3cm}}}$};
     }

  \end{tikzpicture}
\end{center}
The computation of $\coh_{Q_1}$ is very similar to that considered in \cref{app:specialshort1} for the same choice of sublattice (choice (2)). First, using the cohomology of a single site, we find that the cohomology of $Q_1$ is non-trivial provided that a cluster of length $a \geq c_1$ starts on site 2 and a cluster of length $b\geq c_N$ ends on site $N-1$. Second, we again distinguish two cases:
\begin{itemize}
\item[1)] $c_1+c_N \geq N-2$,
\item[2)] $c_1+c_N < N-2$.
\end{itemize}
For case 1) we find that the cohomology of $Q_1$ is non-trivial if and only if all $S_2$ sites are occupied. However, this is not possible since $S_2$ consists of $\ell+p$ consecutive sites and $p>0$. Note that this is different from what we found in \cref{app:specialshort1}, where having all $S_2$ sites occupied \textit{was} allowed. Here, we find that $\coh_{Q_1}$, and thus $\coh_{Q_2} (\coh_{Q_1})$, is trivial for case 1). We conclude that $\coh_Q$ is trivial when $c_1+c_N \geq N-2$. This corresponds to the gray shaded region labeled by (1) in  \cref{fig:specialshort2}.

For case 2) we find that all representatives of $\coh_{Q_1}$ can be written as
\beq
\ket{\psi} = \ket{\bm{0} \underset{c_1}{\underbrace{1\dots1}}} \otimes \ket{\psi '} \otimes \ket{\underset{c_N}{\underbrace{1\dots1}}\bm{0}},
\nonumber
\eeq
where $\ket{\psi'}$ is any state on the middle sites ($c_1+2, c_1+3, \dots, N-1-c_N$) such that in the full representative, $\ket{\psi}$, every particle cluster on $S_2$ has length at most $\ell$. It is not difficult to verify that $\ket{\psi'} \in V_{N'}^{(s')}$ with $N'=N-2-c_1-c_N$ and the special boundary conditions $s' = (c_1'=\min(\ell-c_1,N'-1), c'_{N'} =\min(\ell - c_N,N'-1))$. In particular, using $N'=\ell+p-c_1-c_N$ we find
\beq
c_1' = 
\begin{cases} 
N'-1, & c_N \geq p-1 \\
\ell-c_1, &  c_N < p-1
\end{cases}
 , \quad
c_{N'}' =  
\begin{cases} 
N'-1, & c_1 \geq p-1 \\
\ell-c_N, & c_1 < p-1
\end{cases} .
\nonumber
\eeq

We now turn to solving $\coh_{Q_2} (\coh_{Q_1})$. It is clear that this is equivalent to solving the cohomology of a chain of length $N'$ and special boundary conditions $s'=(c_1',c_{N'}' )$ with $c_1',c_{N'}'$ given above. We solve this problem by considering the following cases:
\begin{itemize}
\item[\textit{(i)}] $c_1 \geq p-1$ and $c_{N} \geq p-1$, 
\item[\textit{(i*)}] $c_1 \geq p-1$ and $c_{N} < p-1$ and vice versa,
\item[\textit{(ii)}] $c_1, c_N < p-1$ and $c_1+c_N \geq p-2$,
\item[\textit{(iii)}] $c_1, c_N < p-1$ and $0 \leq c_1+c_N < p-2$.
\end{itemize}
The labels correspond to the labels in \cref{fig:specialshort2}.

Case i) implies $c_1' = c_{N'}' = N'-1$ and case i*) implies $c_1' = N'-1$ and $c_{N'}' < N'-1$ or vice versa. In both cases we have $N' \leq \ell+1$. The solution to these cohomology problems can be found in \cref{lem:special}. The cohomology has dimension one if $c_1' = c_{N'}' = N'-1$ and is trivial otherwise. The non-trivial element has $N'-1$ particles. 

Case ii) implies $N' \leq \ell+2$, $(c_1',c_{N'}') = (\ell-c_1,\ell-c_N)$ and $c_1', c_{N'}' \leq N'-2$. The solution to these cohomology problems can also be found in \cref{lem:special}.  The cohomology has dimension one if $c_1' + c_{N'}' \geq N'-2$ and is trivial otherwise. The non-trivial element has $N'-2$ particles.

Case iii) implies $\ell+2 < N' \leq \ell+p$, $(c_1',c_{N'}') = (\ell-c_1,\ell-c_N)$ and $c_1', c_{N'}' \leq N'-2$. We do not yet have a general solution for the cohomology problem of open chains of length $\ell+2 < N' \leq \ell+p$. However, note that $c_1' + c_{N'}' = 2 \ell -c_1 -c_N  = N' +\ell -p$ and, consequently, $c_1' + c_{N'}' \geq N'$ using the fact that $p \leq \ell$. This cohomology problem was addressed in the first paragraph of this proof (see case 1) above). We found that the cohomology is trivial in this case.

The solution of cases i) and i*) implies that $\coh_{Q_2} (\coh_{Q_1})$ has dimension one when $c_1, c_N \geq p-1$ and $c_1+c_N < N -2$. The non-trivial element has $N-3$ particles. For $c_1 \geq p-1$ and $c_N < p-1$ (or equivalently $c_1 < p-1$ and $c_N \geq p-1$) the cohomology, $\coh_{Q_2} (\coh_{Q_1})$, is trivial. The solution of case ii) implies that $\coh_{Q_2} (\coh_{Q_1})$ has dimension one for $c_1, c_N < p-1$ and $c_1+c_N \geq p-2$. The non-trivial element has $N-4$ particles. Finally, from case iii) we find that $\coh_{Q_2} (\coh_{Q_1})$ is trivial for $0 \leq c_1+c_N < p-2$.

For all cases we find that $\coh_{Q_2}(\coh_{Q_1})$ is either trivial or has dimension one. It follows that $\coh_Q \simeq \coh_{Q_2}(\coh_{Q_1})$ from \cref{prop:ttt}.
\qedhere
\end{proof}

\subsection{Periodic boundary conditions and $1\leq N\leq \ell+2$}

\begin{repproposition}{lem:periodicshort1}
The cohomology, $\coh_Q (V^{(p)}_N)$, of a chain with periodic boundary conditions has dimension one for $1\leq N\leq \ell+1$ and dimension $\ell+1$ for $N=\ell+2$. The non-trivial elements contain $N-1$ particles.
\end{repproposition}

\begin{proof}
For $1\leq N\leq \ell$ we find that all configurations are allowed except the configuration where all sites are occupied. This is because, due to the closed boundary conditions, the latter configuration corresponds to a cluster of infinite size. Furthermore, for $N=\ell+1$ we also have that all configurations are allowed except the configuration where all sites are occupied. We can now easily solve the cohomology problem for $1\leq N \leq \ell+1$ by taking $S_1$ to be a single site and $S_2$ the remaining $N-1$ sites. The cohomology of $Q_1$ then has dimension one and the non-trivial element has all $S_2$ sites occupied. We then immediately find that $\coh_{Q_2}(\coh_{Q_1})$ also has dimension one and the non-trivial element has $N-1$ particles and $\coh_Q \simeq \coh_{Q_2}(\coh_{Q_1})$. 

For $N=\ell+2$ we also choose $S_1$ to be a single site and $S_2$ the remaining $\ell+1$ sites. The cohomology of $Q_1$ is non-trivial when the $S_1$ site has a cluster of $a$ particles to its left, and a cluster of $b$ particles to its right, such that the sum of their lengths is at least $\ell$. Since $S_2$ can at most be occupied by $\ell$ particles, we find that all the non-trivial elements have $a+b=\ell$. We conclude that there are $\ell+1$ linearly independent non-trivial elements in $\coh_{Q_1}$. All these elements have $\ell$ particles and are thus all also linearly independent non-trivial elements of  $\coh_{Q_2}(\coh_{Q_1})$. Finally, we have $\coh_Q \simeq \coh_{Q_2}(\coh_{Q_1})$ from  \cref{prop:ttt}.
\qedhere
\end{proof}

\subsection{Periodic boundary conditions and $\ell+3\leq N\leq 2\ell+2$}

\begin{repproposition}{lem:periodicshort2}
The cohomology, $\coh_Q (V^{(p)}_N)$, of a chain of length $\ell+3\leq N\leq 2\ell+2$ with periodic boundary conditions has dimension one. The non-trivial elements contain $N-3$ particles.
\end{repproposition}

\begin{proof}
The proof is quite involved and technical, so we first briefly outline the various steps. The proof revolves around the computation of $\coh_{Q_2}(\coh_{Q_1})$, where we take $S_2$ to be $\ell$ consecutive sites and $S_1$ to be the remaining $N_1 \equiv N-\ell$ sites. This can be depicted as follows:
\begin{center}
  \begin{tikzpicture}      
     \draw[thick] (0,0) rectangle (1.8,0.3);
     \draw (0.9,.6) node{$\overset{N_1 = N - \ell \,\text{sites}}{\overbrace{\hspace{1.7cm}}}$};
     \draw (.9,-.3) node{$S_1$};  
     \draw[dotted] (1.8,0.3)--(4.5,.3);
     \draw[dotted] (1.8,0.)--(4.5,0.); 
     \draw[thick] (4.5,0)--(4.5,0.3);
     \draw (3.2,.6) node{$\overset{\ell\,\text{sites}}{\overbrace{\hspace{2.6cm}}}$};
      \draw (3.2,-.3) node{$S_2$};
  \end{tikzpicture}
\end{center}
Here the sites inside the drawn rectangle belong to $S_1$. The right boundary is a drawn line to indicate that it should be identified with the left boundary. Note that $3 \leq N_1 \leq \ell+2$.
The first step is to compute $\coh_{Q_1}$ and obtain representatives of the non-trivial classes. We will find that these representatives fall into two categories, which we call case a) and case b). We then turn to $\coh_{Q_2}(\coh_{Q_1})$. First, we establish that the two cases a) and b) can be considered independently. Second, we consider $\coh_{Q_2}(\coh_{Q_1})$ for case a) and find that the cohomology problem is equivalent to a solved cohomology problem. Third, we consider $\coh_{Q_2}(\coh_{Q_1})$ for case b), which is quite complicated because no such equivalence exists. Finally, we use the tic-tac-toe lemma to obtain $\coh_Q$. 

\paragraph{Cohomology of $Q_1$.} The cohomology problem for $Q_1$ is equivalent to that of an open chain of length $N_1$ with special boundary conditions, $(c_1,c_{N_1})$, that derive from the occupation of $S_2$. Since $N_1 \leq \ell+2$ the solution to this problem can be found in \cref{lem:special}. First, note that if all $\ell$ sites of $S_2$ are occupied we have $c_1=c_{N_1}=0$. From \cref{lem:special} we find that the cohomology is trivial in this case. Second, we consider the case where not all sites of $S_2$ are occupied. The special boundary conditions derive from the length of the cluster that starts on the first site of $S_2$, $k$, and the length of the cluster that ends on the last site of $S_2$, $m$, where
\beq
m+k < \ell, \nonumber
\eeq
to exclude the case where all $S_2$ sites are occupied. We can illustrate this as follows (for convenience we illustrate a case where $m+k \leq \ell-2$, in general, however $m+k < \ell$):
\begin{center}
  \begin{tikzpicture}      
     \draw[thick] (0,0) rectangle (1.8,0.3);
     \draw[] (0.9,-.3) node{$\underset{N_1 \,\text{sites}}{\underbrace{\hspace{1.7cm}}}$};
     \draw[dotted] (1.8,0.3)--(4.5,.3);
     \draw[dotted] (1.8,0.)--(4.5,0.); 
     \draw[thick] (4.5,0)--(4.5,0.3);
     \node at (2.4,0.15) {1 \dots 10};
     \node at (3.9,0.15) {01 \dots 1};
     \draw[] (2.3,-.3) node{$\underset{k \,\text{sites}}{\underbrace{\hspace{0.8cm}}}$};
     \draw[] (4,-.3) node{$\underset{m \,\text{sites}}{\underbrace{\hspace{0.8cm}}}$};
  \end{tikzpicture}
\end{center}
Since $Q_1$ acts only on $S_1$ it if possible to consider the action of $Q_1$ in subspaces with fixed $m,k$. To infer the special boundary conditions for given $m,k$ we first consider the case where $m+k+N_1 \leq \ell$, which implies that the configuration with all $N_1$ sites occupied is allowed. Therefore in this case the cohomology problem for $Q_1$ corresponds to the unconstrained case of \cref{lem:special} and we find that the cohomology is trivial. From now on we thus restrict to $m+k+N_1 > \ell$, which together with $m+k < \ell$ gives
\beq
\ell-N_1 < m+k < \ell. \nonumber
\eeq
In this case, the configuration with all $N_1$ sites occupied is not allowed and we find $c_1 = \min(\ell-m, N_1-1)$ and $c_{N_1} = \min(\ell-k, N_1-1)$, which implies
\beq
c_1 = 
\begin{cases} 
N_1-1, & m \leq \ell+1-N_1 \\
\ell-m, &  m > \ell+1-N_1
\end{cases}
 , \quad
c_{N_1} =  
\begin{cases} 
N_1-1, & k \leq \ell+1-N_1 \\
\ell-k, &  k > \ell+1-N_1
\end{cases} .
\nonumber
\eeq
Using \cref{lem:special} we find that the cohomology of $Q_1$ has dimension one when
\begin{itemize}
\item[a)]{$m,k > \ell+1- N_1$,}
\item[b)]{$m,k \leq \ell+1- N_1$,}
\end{itemize}
with $\ell-N_1< m+k < \ell$. The cohomology of $Q_1$ is trivial otherwise. The non-trivial equivalence classes can be represented by
\beq
\ket{\psi_{k,m}} =\ket{X} \otimes  \begin{cases} \ket{\underset{k \,\text{sites}}{\underbrace{1 \dots 1}}  0 \underset{m \,\text{sites}}{\underbrace{1 \dots 1}}} & m+k = \ell-1 \\
\ket{\underset{k \,\text{sites}}{\underbrace{1 \dots 1}}  00 \underset{m \,\text{sites}}{\underbrace{1 \dots 1}}} & m+k = \ell-2 \\
\ket{\underset{k \,\text{sites}}{\underbrace{1 \dots 1}}0} \otimes \ket{\psi'} \otimes \ket{0 \underset{m \,\text{sites}}{\underbrace{1 \dots 1}}} & \ell-N_1<  m+k < \ell-2 
\end{cases}
\nonumber
\eeq
where $\ket{X}$ has support on all $N_1$ sites of sublattice $S_1$ and for case a) $\ket{X} = \ket{A}\equiv \ket{01 \dots 10}$ and for case b) $\ket{X} = \ket{B}\equiv \ket{01 \dots 1}$. Finally, the state $\ket{\psi'}$ has support on the sites $N_1+k+2, \dots, N_1+\ell-m-1$, which belong to $S_2$. Note that the $\ket{\psi_{k,m}}$ are representatives for any choice of the state, $\ket{\psi'}$ and that the $\ket{\psi'}$ belong to the unconstrained Hilbert space spanned by all possible configuration on $\ell-k-m-2$ sites. In the following we will assume without loss of generality that $\ket{\psi'}$ has a definite fermion number.

The structure of $\coh_{Q_1}$ is illustrated in \cref{fig:closedQ1}.
\begin{figure}[ht]
\begin{center}
\includegraphics[width = 0.45\textwidth]{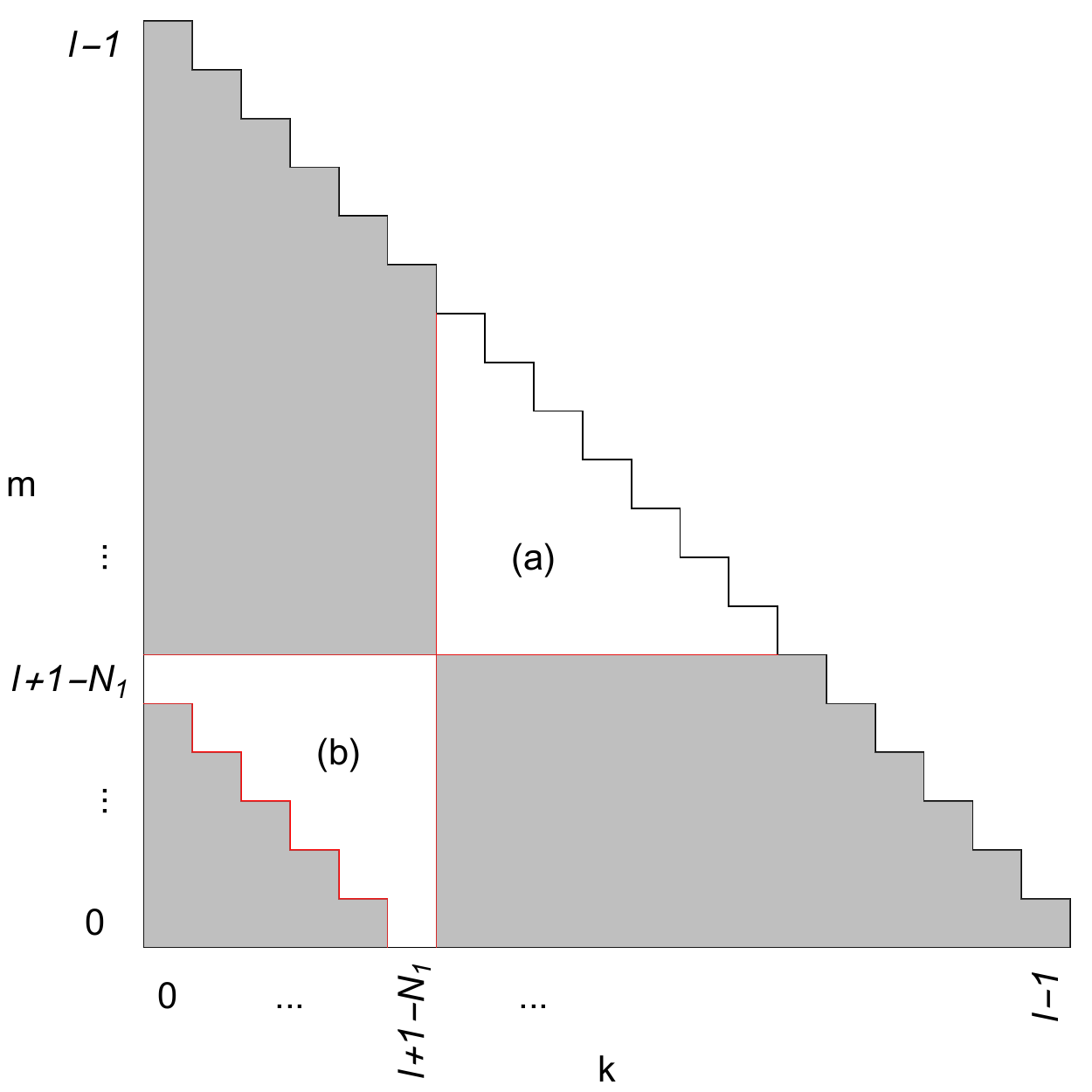}
\includegraphics[width = 0.45\textwidth]{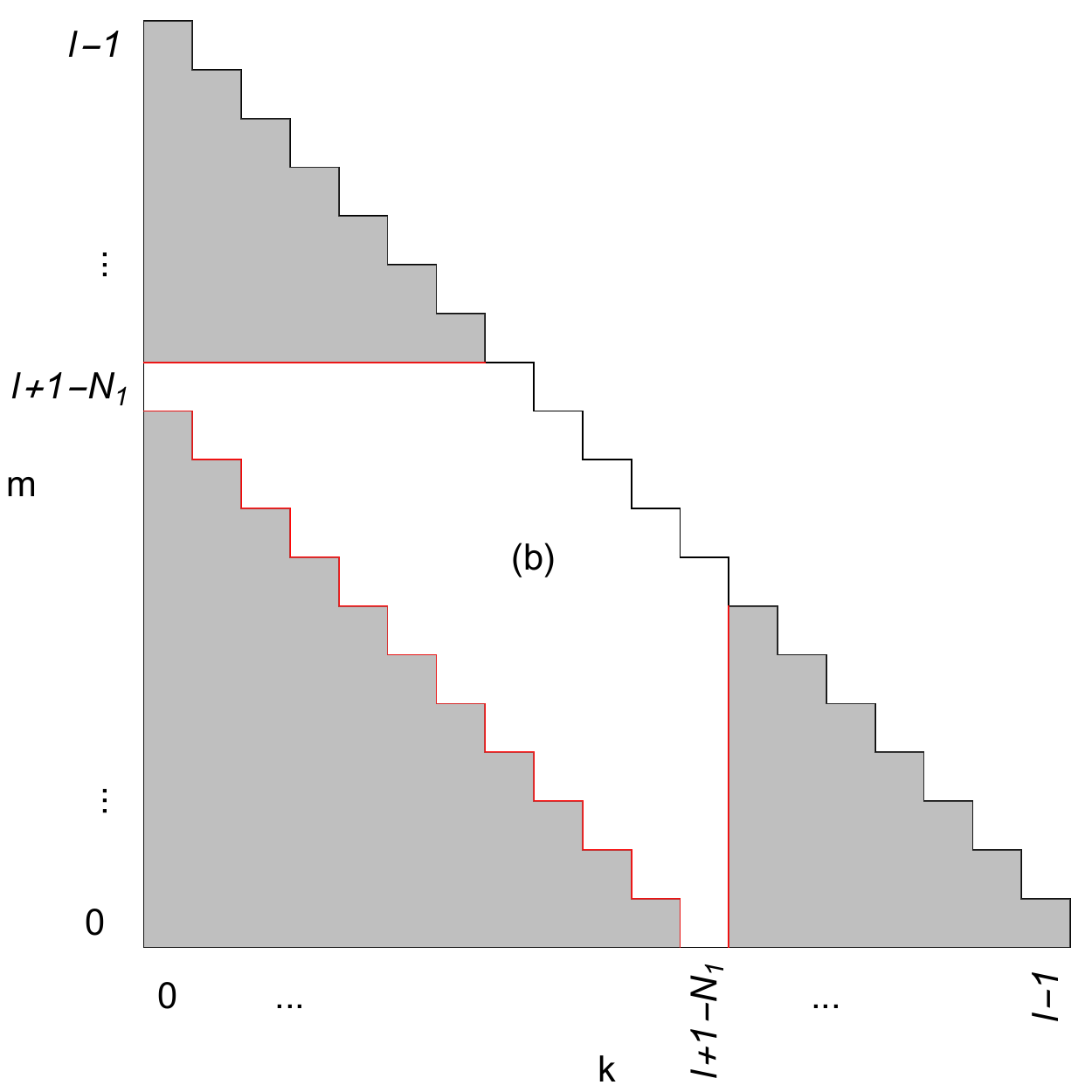}
\caption{We illustrate the structure of the cohomology of $Q_1$ for closed chains of length $N=N_1+\ell$ and $3 \leq N_1 \leq \ell+2$ with periodic boundary conditions for different values of $k,m$. Here  the sublattice $S_1$ consists of $N_1$ consecutive sites, $k$ is the length of the cluster that starts on the first site of $S_2$ and $m$ the length of the cluster that ends on the last site of $S_2$. Note that $m+k<\ell$. The regions for which $\coh_{Q_1}$ is trivial are gray shaded, for the values of $k,m$ corresponding to the white regions $\coh_{Q_1}$ has at least one non-trivial element. The diagram on the left corresponds to $\ell/2+2<N_1 \leq \ell+2$, the diagram on the right is for $3 \leq N_1 \leq \ell/2+2$. The labels correspond to different cases we consider in the proof: (a) $m,k > \ell+1- N_1$, (b) $m,k \leq \ell+1- N_1$ and $m+k>\ell-N_1$. \label{fig:closedQ1}}
\end{center}
\end{figure}

\paragraph{$\coh_{Q_2}(\coh_{Q_1})$: independence of cases a) and b).} We now turn to computing $\coh_{Q_2}(\coh_{Q_1})$. We will first show that the cases a) and b) can be considered separately. First, we note that for any representative, $\ket{\psi_{k,m}}$, of case b) and any representative, $\ket{\psi_{k',m'}}$, of case a), we have $k'>k$ and $m'>m$. This can be seen in \cref{fig:closedQ1}, where it is clear that to go from region b) to region a) you have to move both up and to the right. Second, we show that up to coboundaries of $Q_1$, $Q_2$ maps a representative $\ket{\psi_{k,m}}$ onto a superposition of representatives $\ket{\psi_{k',m'}}$ with either $k'=k$ and $m' \geq m$ or $k' \geq k$ and $m' = m$. To see this we consider the action of $Q_2$ on a representative, $\ket{\psi_{k,m}}$, of $\coh_{Q_1}$ for case b). It is easy to see that for $m+k = \ell-1$, we find $Q_2 \ket{\psi_{k,m}} = 0$. For $m+k = \ell-2$, we find 
\beq
Q_2 \ket{\psi_{k,\ell-2-k}} = (-1)^{N_1-1+k} ( (1-\delta_{k+N_1,\ell+1}) \ket{\psi_{k+1,\ell-2-k}}+\ket{\psi_{k,\ell-1-k}}),
\nonumber
\eeq
where the Kronecker delta ensures that the first term is absent when $k+N_1=\ell+1$, this is necessary because $\ket{B}$ ends with a cluster of length $N_1-1$. Finally, for $\ell-N_1<  m+k < \ell-2$ we find 
\bea
Q_2 \ket{\psi_{k,m}} &=& (-1)^{N_1-1+k} \ket{B} \otimes \left(
 (1-\delta_{k+N_1,\ell+1}) \ket{\underset{k+1 \,\text{sites}}{\underbrace{1 \dots 1}}} \otimes \ket{\psi'} \otimes \ket{0 \underset{m \,\text{sites}}{\underbrace{1 \dots 1}}} \right. \nonumber\\
& &+  \ket{\underset{k \,\text{sites}}{\underbrace{1 \dots 1}}0} \otimes Q_2 \ket{\psi'} \otimes \ket{0 \underset{m \,\text{sites}}{\underbrace{1 \dots 1}}}
 + \left. (-1)^{x} \ket{\underset{k \,\text{sites}}{\underbrace{1 \dots 1}}0} \otimes \ket{\psi'} \otimes \ket{\underset{m+1 \,\text{sites}}{\underbrace{1 \dots 1}}} \right) \nonumber\\
&=& (-1)^{N_1-1+k} \Big(  \sum_{k<k' \leq \ell+1-N_1} c_{k'} \ket{\tilde{\psi}_{k',m}}\nonumber\\
& & + \sum_{m<m' \leq \ell+1-N_1} c_{m'} \ket{\tilde{\psi}_{k,m'}} +  \ket{\tilde{\psi}_{k,m}} \Big)  + Q_1 \ket{\phi}.
\nonumber
\eea
We introduced $x$ as the number of particles in $\ket{\psi'}$ and $c_{k'}$ are constants depending on $\ket{\psi'}$. The tildes in the last line indicate that these representatives have a different state, $\ket{\tilde{\psi}'}$, on the sites $N_1+k+2, \dots, N_1+\ell-m-1$. Finally, we imposed an upper bound on $k'$ and $m'$ in the sums by using the fact that a state with $k\leq \ell+1-N_1<m$ or $m\leq \ell+1-N_1<k$ is a coboundary of $Q_1$. This leads to the term $Q_1 \ket{\phi}$, where $\ket{\phi}$ is some state. We find that all representatives $\ket{\tilde{\psi}}$ belong to case b). We conclude that $Q_2$ cannot map a state that belongs to case b) to a state that belongs to case a) or vice versa. We can thus treat these two cases independently. 

\paragraph{$\coh_{Q_2}(\coh_{Q_1})$: case a).}
For case a) we find that $Q_2$ acts trivially on the first $\ell+2- N_1$ and last $\ell+2- N_1$ sites of $S_2$ since $m,k \geq \ell+2- N_1$. Furthermore, since $m+k<\ell$ we find that any configuration on the remaining sites is allowed, except all occupied. Combining the lower bound on $k,m$ and the strict upper bound on their sum, we find that case a) only occurs when $\ell>k+m\geq 2(\ell+2-N_1)$, which implies $N>3\ell/2+2$. For these cases, we find that $Q_2$ effectively acts on a chain of length $N_{\textrm{eff}} \equiv \ell-2(\ell+2- N_1) \leq \ell$ for which all but the completely full configuration is allowed since $m+k < \ell$. The solution to this problem can be found in \cref{lem:special} using special boundary conditions $s=(N_{\textrm{eff}}-1,N_{\textrm{eff}}-1)$. We find that the dimension of the cohomology is one. We conclude that case a) contributes one non-trivial element to $\coh_{Q_2}(\coh_{Q_1})$ when $N>3\ell/2+2$ and none otherwise. There exists a representative of this class that has $N-3$ particles.

\paragraph{$\coh_{Q_2}(\coh_{Q_1})$: case b).}
For case b) the computation is more complicated. We first introduce a suitable choice for the states $\ket{\psi'}$. Let us denote $Q_2$ restricted to the sites $N_1+k+2, \dots, N_1+\ell-m-1$ by $\bar{Q}_2$. Remembering that the states $\ket{\psi'}$ belong to the unconstrained Hilbert space, we easily check that $\coh_{\bar{Q}_2}$ vanishes for given $k,m$. It follows that there is a basis of doublet representations of $\bar{Q}_2$ and we can choose to write any state $\ket{\psi'}$ as a state $\ket{\chi_i}$ such that
\beq
\bar{Q}_2 \ket{\chi_{2r-1}} = \ket{\chi_{2r}} \quad \textrm{and} \quad \bar{Q}_2 \ket{\chi_{2r}} = 0,
\nonumber
\eeq
for $r=1, \dots, 2^{\ell-m-k-3}$. Finally, let us write the representative $\ket{\psi_{m,k}}$ with $\ket{\psi'}=\ket{\chi_i}$ as $\ket{\psi_{m,k,i}}$. Considering again the action of $Q_2$ on such a representative with $i=2r-1$, we find
\bea
Q_2 \ket{\psi_{k,m,2r-1}} &=& (-1)^{N_1-1+k} \Big(  \sum_{k',m',j} c_{k',m',j} \ket{\psi_{k',m',j}} +  \ket{\psi_{k,m,2r}} \Big)  + Q_1 \ket{\phi}, \nonumber
\eea
for some state $\ket{\phi}$, some constants $c_{k',m',j}$ and $k',m' \leq \ell+1-N_1$ and $k'+m'>k+m$. Note that the first term is absent for $k=m=\ell+1-N_1$. We can now conclude the following for $\ell-N_1 < m+k <\ell-2$:
\begin{itemize}
\item[i)] $\ket{\psi_{k,m,2r-1}}$ is not in the kernel of $Q_2$,
\item[ii)] $\ket{\psi_{k,m,2r}}$ is a representative of the trivial class of $\coh_{Q_2}(\coh_{Q_1})$ for $m = k = \ell+1-N_1$,
\item[iii)] $\ket{\psi_{k,m,2r}}$ is in the same equivalence class as a superposition of representatives $\ket{\psi_{k',m',i}}$ with $k'+m'>k+m$ unless $m = k = \ell+1-N_1$.
\end{itemize}

We will now use these properties to compute $\coh_{Q_2}(\coh_{Q_1})$ for two separate cases: 
\begin{itemize}
\item[b1)] $N> 3\ell/2+2$,
\item[b2)] $N \leq 3\ell/2+2$.
\end{itemize}
An example of case b1) (case b2)) is given on the left (right) of \cref{fig:closedQ1}. Note that for case b2) case a) is absent.

For case b1) we have that $N_1> \ell/2+2$ and therefore $k+m<\ell-2$ since $k,m \leq \ell+1-N_1$. We now prove by induction that all $\ket{\psi_{k,m,i}}$ are representatives of the trivial class of $\coh_{Q_2}(\coh_{Q_1})$. First, it follows from i) and ii) that all $\ket{\psi_{k,m,i}}$ with $m = k = \ell+1-N_1$ are representatives of the trivial class of $\coh_{Q_2}(\coh_{Q_1})$. In \cref{fig:closedQ1} $m = k = \ell+1-N_1$ is the upper-right corner of the region b). We now assume that for $m + k > s$, with $s$ some integer such that $\ell-N_1 < s < 2(\ell+1-N_1)$, all $\ket{\psi_{k,m,i}}$ are representatives of the trivial class of $\coh_{Q_2}(\coh_{Q_1})$. In \cref{fig:closedQ1} $m+k=s$ is a diagonal and we assume that the part of region b) above and to the right of this diagonal contains only representatives of the trivial class. Using i) and iii) we can then prove that all $\ket{\psi_{k,m,i}}$ with $m +k=s$ are also representatives of the trivial class. From i) we know that the states with $k+m=s$ and $i$ odd are not in the kernel of $Q_2$ and thus are representatives of the trivial class. From iii) we know that the states with $k+m=s$ and $i$ even are in the same equivalence class as states with $k+m>s$, which by assumption are representatives of the trivial class. Since our assumption is true for $s = 2(\ell+1-N_1)$, we conclude that all $\ket{\psi_{k,m,i}}$ with $\ell-N_1 < m+k < 2(\ell+1-N_1)$ are representatives of the trivial class of $\coh_{Q_2} (\coh_{Q_1})$.

For case b2) we prove by induction that all $\ket{\psi_{k,m,i}}$ are representatives of either the trivial class or of a unique non-trivial class of $\coh_{Q_2}(\coh_{Q_1})$. First, we show that this holds for the representatives with $k+m = \ell-2,\ell-1$ (note that these were absent in case b1)). Remembering the action of $Q_2$ on these representatives we conclude that:
\begin{itemize}
\item[1)] $\ket{\psi_{k,\ell-k-1}} \in \im Q_2$ for $k=\ell+1-N_1$ and
\item[2)] $\ket{\psi_{k,\ell-k-1}}$ is in the same equivalence class as $\ket{\psi_{k-1,\ell-k}}$ for $N_1-2 \leq k < \ell+1-N_1$.
\end{itemize}
It follows that the $\ket{\psi_{k,\ell-k-1}}$ with $N_1-2\leq k < \ell+1-N_1$ are representatives of the same equivalence class as $\ket{\psi_{k,\ell-k-1}}$ with $k=\ell+1-N_1$, which is the trivial class. Furthermore, it follows that there is only one state with $m+k=\ell-2$  that is in the kernel of $Q_2$:
\beq
Q_2  \ket{\Psi} =0, \quad \ket{\Psi} \equiv \sum_{k=N_1-3}^{\ell+1-N_1} \ket{\psi_{k,\ell-k-2}}. \nonumber
\eeq

Note that for $2(\ell+1-N_1) = \ell-2$, that is $N = 3\ell/2+2$, which requires $\ell$ even, there are no states with $k+m = \ell-1$ and there is precisely one state with $k+m=\ell-2$:  $\ket{\Psi} = \ket{\psi_{\ell+1-N_1,\ell+1-N_1}}$ which is in the kernel of $Q_2$ within $\coh_{Q_1}$. 

The inductive step is very similar to that in case b1). We assume that for $m + k > s$, with $s$ some integer such that $\ell-N_1 < s < \ell-1$, all $\ket{\psi_{k,m,i}}$ are either representatives of the trivial class, $[0]$, or of the same equivalence class as $\ket{\Psi}$, $[\ket{\Psi}]$. From i) we know that the states with $k+m=s$ and $i$ odd are not in the kernel of $Q_2$ and thus are representatives of the trivial class. From iii) we know that the states with $k+m=s$ and $i$ even are in the same equivalence class as states with $k+m>s$, which by assumption are representatives of either $[0]$ or $[\ket{\Psi}]$. Since our assumption is true for $s = \ell-2$ and the states with $m+k=\ell-1$ are representatives of $[0]$, we conclude that all $\ket{\psi_{k,m,i}}$ with $k+m > \ell-N_1$ and $k,m \leq \ell+1-N_1$ are representatives of either $[0]$ or $[\ket{\Psi}]$. We thus conclude that for $N \leq 3\ell/2+2$ the dimension of $\coh_{Q_2} (\coh_{Q_1})$ is one and the non-trivial class has a representative with $\ell-2+N_1-1 = N-3$ particles.

Combining cases a) and b) we find that $\coh_{Q_2}(\coh_{Q_1})$ has dimension one, which gives $\coh_Q \simeq \coh_{Q_2}(\coh_{Q_1})$ from \cref{prop:ttt}.

\qedhere
\end{proof}

\end{document}